\newtheorem{theorem}{Theorem}
\theoremstyle{definition}
\theoremstyle{plain}
\newtheorem{assumption}{Assumption}
\newtheorem{lemma}[theorem]{Lemma}
\newtheorem{proposition}{Proposition}
\journal{Journal of \LaTeX\ Templates}
\begin{document}

\begin{frontmatter}

\title{Approximating Posterior Predictive Distributions by Averaging Output From Many Particle Filters}

\author{Taylor R. Brown }
\address{Department of Statistics \\ University of Virginia}

\begin{abstract}
This paper introduces the {\it particle swarm filter} (not to be confused with particle swarm optimization): a recursive and embarrassingly parallel algorithm that targets an approximation to the sequence of posterior predictive distributions by averaging expectation approximations from many particle filters. A law of large numbers and a central limit theorem are provided, as well as an numerical study of simulated data from a stochastic volatility model.
\end{abstract}

\begin{keyword}
particle filter, central limit theorem, sequential monte carlo
\MSC[2010] 00-01\sep  99-00
\end{keyword}

\end{frontmatter}


\section{Introduction}

\subsection{Overview}

When parameter values are unknown, Bayesian forecasters seek to obtain the sequence of posterior predictive distributions. A common technique in practice consists of sampling parameter values from the posterior (either exactly or approximately), using each of these parameter values to calculate a conditional forecast, and then averaging those predictions together. Alternatively, one may use a mode or mean of the posterior distribution  (obtained, again, through either an approximate or exact calculation), and use that value to calculate a single prediction. 

When using nonlinear and/or non-Gaussian state-space models, there are two difficulties that come with this strategy. First, sampling from a posterior distribution is often computationally intensive \cite{mcmcforssms}. Second, despite particle filters having well-understood guarantees \cite{delmoral}, \cite{chopin2004}, \cite{douc2008}, \cite{kunsch2005}, \cite{legland2004}, \cite{VANHANDEL20093835}, \cite{Douc_2014}, \cite{whiteley2013}, the theoretical support for averaging randomly-instantiated particle filters, to the best of the author's knowledge, has not appeared previously in the literature.

This paper describes the {\it particle swarm filter} (not to be confused with particle swarm optimization): a recursive, embarrassingly parallel algorithm that targets an approximation to the sequence of posterior predictive distributions. It does this by averaging expectation approximations from $N_{\theta} \in \mathbb{N}$ particle filters, each with its $N_X \in \mathbb{N}$ state samples/particles, and it averages these particle filters with respect to the prior distribution, not the posterior distribution.

\subsection{Related Work}

The work in this paper is motivated by the desire for real-time forecasts. Accomplishing this requires, above all else, an on-line algorithm--one whose computational complexity remains bounded through time. This goal is longstanding, so I will mention how this work relates to other algorithms proposed in the past. Throughout this subsection it is important to be aware of where all these algorithms lie in the three-way tradeoff between bias, variance, and computational cost.

The first way to obtain real-time forecasts would be to use point estimates of the unknown parameter value to instantiate a single particle filter. After the single particle filter is instantiated, it can  be run in real-time to obtain forecasts at every time point. 

The drawbacks to this approach are obvious. First, if the point estimate summarizes a parameter posterior, forecasts obtained from the particle filter would only (roughly) target the posterior predictive distribution at the moment when the particle filter conditions on the same set of observed data that the parameter posterior does. Therefore, this approach comes with a bias. Second, sampling-based approaches for conducting inference on the parameter posterior can be computationally expensive. If these approaches are to be used, care must be taken that parameter estimates can be obtained in a timely manner. Third, using a single point estimate from the parameter posterior disregards parameter uncertainty.

If real-time forecasts from the posterior predictive distribution are to be obtained with a mixture of particle filters, then it seems that the parameter weights would need to change at every time point. The next question that one might ask is: “how can one continually sample from the sequence of parameter posteriors, and use those parameter samples to generate forecasts (via state filtering), all in an on-line way?” 

Accomplishing these two goals simultaneously, quantifying parameter uncertainty and forecasting, in a general state-space model with an on-line algorithm, might have first been attempted by \cite{selforganizing}. This novel idea was to extend the model's state space by including the unknown parameters as elements of each state vector. The state process of the extended model is still Markovian, and so, ostensibly, all particle filtering algorithms could be used on this same model after its state was reconsidered.

Unfortunately, the  issue of particle ``degeneracy" arises with this approach. As long as parameter samples are drawn at each time from parameter samples at previous time points, the number of unique parameter values is non-increasing, and this problem cannot be avoided. \cite{selforganizing} recommended sampling errors at each time point, for the parameter transitions, to mitigate this difficulty, and \cite{liuandwest} improves upon this approach by correlating said errors with state values. These errors, though, are artificial in the sense that one is no longer targeting the original model with the resulting particle filter. However, standard errors for estimates of the analogous quantities do not increase as quickly in time, and might all be bounded.

There exists another class of on-line algorithms used for this same task. These are only suitable for a narrower collection of state-space models: ones whose parameter posterior (this time conditioning on both state and observation trajectories) admits sufficient statistics that can be calculated recursively. This line of reasoning started with \cite{storvik} and \cite{fearnhead2002}, and was explored further in \cite{particlelearning}. These algorithms do not add any bias by refocusing on a different model, but simulation results suggest that these algorithms do not avoid the degeneracy issue.

Third, a common way to deal with the degeneracy issue is to use the ``resample-move" algorithm \cite{resamplemove}, or a related algorithm such as the $\text{SMC}^2$ algorithm described in \cite{smcsquared}. These algorithms are recursive. Unfortunately, however, they are not on-line, and so they are not suitable for real-time forecasting. A passage in this same paper reads ``a genuinely on-line analysis, which would provide constant Monte Carlo error at a constant CPU cost, with respect to all the components of [each time's state vector and the vector of parameters] may well be an unattainable goal."

Recapitulating and returning to the first approach, batch parameter estimation algorithms, when used in conjunction with on-line particle filters, could be useful in another way. After sampling from the parameter posterior, possibly with an MCMC algorithm such as one detailed in \cite{pmmh}, a collection of parameter samples could be used to instantiate many particle filters, instead of just one. All of these would still be run forward in time, and somehow each filter's forecasts could be averaged to obtain forecasts in real-time.

This approach would suffer from a similar bias issue, but it is still on-line; it quantifies parameter uncertainty in a potentially better way, and it would mitigate the degeneracy issue as it does not attempt to sample from the sequence of parameter posteriors. Furthermore, much of the additional computational cost could be parallelized away.

This is the central motivation for the particle swarm filter. In certain circumstances, one may not do better than sampling parameters once, and instantiating many particle filters with those different parameter values. Also, whether one is sampling parameters from an ``old" parameter posterior and forecasting for future time points, or if one is sampling from the parameter prior and generating forecasts immediately, both strategies may be seen as special cases of the same algorithm. 

Even though it may be simple to describe and program the averaging portion of this algorithm, it may not always be advisable to do so in practice. This paper’s primary contributions are Theorem~\ref{thm:pswarmconsistency} and Theorem~\ref{thm:pswarmclt1}, which can be used to assess whether and how to implement an averaging technique. 

Averaging and parallelization are not novel ideas in statistics. These have been explored deeply in previous work. For example, the particle swarm filter appears to be very similar to the $\text{IS}^2$ algorithm \cite{issquared}, which can also be used to average particle filter output. However, the purpose of the $\text{IS}^2$ algorithm is to estimate marginal likelihoods, not to provide forecasts. Second, parallelization procedures have been well-explored as well, for example in \cite{whiteley2016}. However, papers like this one are more concerned with ``scaling-up” particle filters by adjusting the resampling mechanism. They assume that the model’s parameters are known, and so are not concerned with the task of taking  into account parameter uncertainty to improve forecasts. 

The document is organized as follows: section~\ref{sec:1} describes the requisite background on state-space models and particle filters. Section~\ref{sec:2} provides a self-contained collection of technical results. This includes some well-known results concerning single particle filters, as well as two novel theorems regarding the particle swarm filter. Theorem~\ref{thm:pswarmconsistency} shows consistency of estimates at each time point, and Theorem~\ref{thm:pswarmclt1} shows asymptotic normality of estimates at each time point. Finally, section~\ref{sec:sim_study} provides a simulation study supporting the theoretical results.

\section{Definitions and Algorithms} \label{sec:1}

\subsection{General Notation}

For a measurable space $(\mathsf{Z}, \mathcal{Z})$, the space of measures and probability measures are denoted as $\mathscr{M}(\mathsf{Z})$ and $\mathscr{P}(\mathsf{Z})$, respectively. Product spaces are written with a numeral superscript (e.g. $\mathsf{Z}^2 = \mathsf{Z} \times \mathsf{Z}$), and product sigma fields' and measures' superscripts are written with the $\otimes$ character: $\nu \otimes \nu = \nu^{\otimes 2}$. Random variables will be referred to in both upper and lower case, and collections will be given subscripts that possess a colon (e.g. $\{z_t\}_{t=1}^T = z_{1:T}$). For any $\nu \in \mathscr{P}(\mathsf{Z})$ $L^p(\mathsf{Z}, \nu) = \{ f : \mathsf{Z} \to \mathbb{R} : \left( \nu(|f|^p) \right)^{1/p} < \infty  \}$, where $p=1,2$ will be used. 

For any bounded and measurable function $f$ defined on $\mathsf{Z}$, the supremum norm of this function is written as as $||f||_{\infty} = \sup_{z \in \mathsf{Z}} |f(z)|$. If $\nu \in \mathscr{M}(\mathsf{Z})$, then $\nu(f) = \int_{\mathsf{Z}} \nu(dz) f(z)$. For any (possibly-unnormalized) kernel $K : \mathsf{Z} \times \mathcal{Z} \to [0, \infty)$ and for any $A \in \mathcal{Z}$, we write the marginal measure $\int \nu(dz_1) K(z_1, A)$ as $\nu K(A)$, and the integral $\int K(z_1, dz_1)f(z_2)$ as $K(z_1,f)$. 

Finally, all random variables are assumed to be defined on some overarching probability space $(\Omega, \mathcal{F}, \mathbb{P})$.

\subsection{State-Space Models}

A state-space model is defined by a collection of probability distributions describing three things: an observed sequence of data $y_{1:T}$, an unobserved sequence of data $x_{1:T}$, and a collection of parameters $\theta$ governing all of the model's distributions \cite{iihmm}. Let $(\Theta, \mathcal{T})$, $(\mathsf{X}, \mathcal{X})$ and $(\mathsf{Y}, \mathcal{Y})$ be their measurable spaces, respectively. 

A prior distribution is selected for the unknown parameter $\pi : \mathcal{T} \to [0,1]$ for quantifying a priori parameter uncertainty. Next, a distribution for the state vector at the first time point is also required: $\mu_{\theta}(d x_1) := \mu(\theta, d x_1) : \Theta \times \mathcal{X} \to [0,1]$. The probabilistic time evolution of state vectors is described by the normalized state transition kernel $F_{\theta}(x_{t-1},dx_t) := F([\theta,x_{t-1}], dx_t ) : \Theta \times \mathsf{X} \times \mathcal{X} \to [0,1]$. Finally, conditioning on a state vector and the set of model parameters, the distribution of the contemporaneous observation is written as $G([\theta,x_{t}], dy_t): \mathsf{X} \times \mathcal{Y} \to [0,1]$. For convenience, this Markov kernel is assumed to be dominated by a sigma-finite measure on $(\mathsf{Y}, \mathcal{Y})$, written as $dy_t$. This allows the observation kernel to be written as $G([\theta,x_{t}], dy_t) = g_{\theta}(x_t,y_t)dy_t$. 

We also define the following unnormalized transition kernels $T_{\theta,y_t}(x_{t-1}, A) := T([\theta, x_{t-1},y_t], A) = \int_A F_{\theta}(x_{t-1},dx_{t}) g_{\theta,y_t}(x_t)$ and $T_{\theta,y_1}( A) := T([\theta,y_1], A) = \int_A \mu_{\theta}(dx_{1}) g_{\theta,y_1}(x_1)$. These will help with describing the classic particle filter algorithms in the next section.

\subsection{Quantities of Interest}\label{qoi}

When the parameter values of a state-space model are known, filtering distributions and evaluations of the likelihood are available. The likelihood is defined as 
\begin{equation*}
L_{\theta}(y_{1:t}) 
= 
\int \cdots \int \mu_{\theta}(dx_1) g_{\theta,y_1}(x_1) \times \cdots \times F_{\theta}(x_{t-1},dx_t) g_{\theta,y_t}(x_t).
\end{equation*}
In this paper, filtering distributions will be used to obtain expectations that help with forecasting. For example
\begin{equation}
\mathbb{E}[y_{t+1} \mid \theta, y_{1:t}] = \phi_{\theta, y_{1:t}}( f )
\end{equation}
where $f(x_t, \theta) = \mathbb{E}[y_{t+1} \mid x_t, \theta ]$. After deriving $f$ by hand, and plugging in parameter values (either assumed to be known or estimates), particle filters would provide recursive formulas for approximations to these quantities. Unfortunately, this technique is not able to quantify uncertainty with respect to the unknown parameter values--this process does not target the sequence of posterior predictive distributions. 

The fundamental idea for this paper is to use the above formula, but for many particle filters, and then average those predictions together. Each particle filter will use a parameter vector sampled from a user-chosen proposal distribution. The idea uses the following decomposition:
\begin{equation}\label{eqn:post_pred}
\mathbb{E}[y_{t+1} \mid y_{1:t}] = 
\int_{\Theta} \phi_{\theta, y_{1:t}}( f )p(d\theta \mid y_{1:t} ) \approx
\int_{\Theta} \phi_{\theta, y_{1:t}}( f )\pi(d\theta).
\end{equation}
Integration with respect to the posterior is approximated with integration with respect to the prior. Obtaining an accurate and recursive algorithm that targets the sequence of parameter posteriors, that possesses a computational cost that does not grow in time, has long been recognized as a difficult \cite{selforganizing}, \cite{liuandwest}, \cite{estimationreview} \cite{smcsquared}, and so that is why it is avoided. 

This approach, on the other hand, is not routinely biased above or below, and the prior can be easily replaced with a ``working prior." For example, let $1 \le h < t$ and replace the prior distribution in equation~\ref{eqn:post_pred} with an outdated posterior distribution:
\begin{equation}
\mathbb{E}[y_{t+1} \mid y_{1:t}] \approx
\int_{\Theta} \phi_{\theta, y_{1:t}}( f )p(d\theta \mid y_{1:h}).
\end{equation}

\subsection{Particle Filters}

When the parameter vector of a state-space model is known, a particle filter provides sample-based approximations to expectations with respect to each time point's filtering distribution $\phi_{\theta, y_{1:t}}(dx_t) : \Theta \times \mathsf{Y}^t \times \mathcal{X} \to [0,1]$. The filtering distributions for any state-space models always satisfies the following recursion

\begin{equation}
\phi_{\theta, y_{1:t}}(dx_t) = 
\frac{
\phi_{\theta,y_{1:t-1}}(T_{\theta,y_t}(\cdot, dx_t))
}{ 
\phi_{\theta,y_{1:t-1}}(T_{\theta,y_t}(\cdot, 1))
}.
\end{equation}
Unfortunately, these recursions are not always tractable for every state-space model, and so this explains the popularity and necessity of particle filter algorithms. 

To use one, the user must choose a sequence of proposal distributions. At the first moment, $Q_{\theta,y_1}(dx_1) := Q([\theta,y_1], dx_1) : \Theta \times \mathsf{Y} \times \mathcal{X} \to [0,1]$ is used to sample proposals, targeting the first time point's filtering distribution. These samples are weighted, and resampled from. At subsequent time points, $Q_{\theta,y_t}(x_{t-1},dx_t) := Q([\theta,x_{t-1}, y_t], dx_t) : \Theta \times \mathsf{X} \times \mathsf{Y} \times \mathcal{X} \to [0,1]$ is used to ``mutate" old particles into new ones.

A broad array of particle filtering algorithms fall under the category of Sequential Importance Sampling with Resampling (SISR) \cite{iihmm}. Turning samples that approximate $t-1$'s filtering distribution into samples that approximate time $t$'s filtering distribution is a two-step process. These steps go by the names mutation (or propagation) and resampling (or selection). 

In the mutation step, the proposal distribution is used to draw new samples from old ones, as well as to adjust each sample's weight according to how well they cohere with their new target. Time $t$ samples and weights are gathered into triangular arrays: $\{\tilde{X}^{N,j}_t\}_{1 \le j \le N_X}$ and $\{\tilde{W}_{1,\theta}^{N,j}\}_{1 \le j \le N_X}$, respectively. 

In the second stage, resampling transforms weighted samples into unweighted samples by drawing conditionally independently from what you possess. Algorithm~\ref{alg:sisr} describes the process in full detail. 
\begin{algorithm}
	\caption{SISR} 
	\label{alg:sisr}
	\begin{algorithmic}[1]
		\For {$t=1,2,\ldots$}
			\If{$t$ equals $1$} 
				\For {$j=1,2,\ldots,N_X$}
					\State Draw $\tilde{X}^{N,j}_1 \sim Q_{\theta,y_1}(dx_1)$
					\State Calculate $\tilde{W}_{1,\theta}^{N,j} = \frac{d T_{\theta,y_1}( \cdot)}{dQ_{\theta,y_1}( \cdot)}(\tilde{X}_1^{N,j})$
				\EndFor 
				\State Optionally calculate $\hat{\phi}_{N_X, \theta,y_{1}}(f)$ and/or $\hat{L}_{\theta}(y_{1})$
				\For {$j=1,2,\ldots,N_X$}
					\State Draw $I_1^{N,j}$ with probability $P(I_1^{N,j} = j') \propto \tilde{W}_{1,\theta}^{N,j'}$ 
					\State Set $X_1^j = \tilde{X}_1^{N,I_1^{N,j}}$ 
				\EndFor
			\Else  
				\For {$j=1,2,\ldots,N_X$}
					\State Draw $\tilde{X}_t^{N,j} \sim Q_{\theta,y_t}(x_{t-1}^{N,j},\cdot)$
					\State Calculate $\tilde{W}_{t,\theta}^{N,j} = \frac{d T_{\theta,y_t}(x_{t-1}^{N,j}, \cdot)}{dQ_{\theta,y_t}(x_{t-1}^{N,j}, \cdot)}(\tilde{X}_t^{N,j})$
				\EndFor
				\State Optionally calculate $\hat{\phi}_{N_X, \theta,y_{1:t}}(f)$ and/or $\widehat{\frac{L_{\theta}(y_{1:t})  }{L_{\theta}(y_{1:t-1}) }}$

				\For {$j=1,2,\ldots,N_X$}
					\State Draw $I_t^{N,j}$ with probability $P(I_t^{N,j} = j') \propto \tilde{W}_{t,\theta}^{N,j'}$ 
					\State Set $X_t^{N,j} = \tilde{X}_t^{N,I_t^{N,j}}$
				\EndFor 
			\EndIf
		\EndFor
	\end{algorithmic} 
\end{algorithm}

After mutation, two things can be calculated with the array of unnormalized weights. The first is an approximation to any filtering expectation:
\begin{equation}
\hat{\phi}_{N_X, \theta,y_{1:t}}(f) := \sum_{j=1}^{N_X} \frac{\tilde{W}_{t,\theta}^{N,j}}{\sum_{j'} \tilde{W}_{t,\theta}^{N,j'}} f(\tilde{X}_t^{N,j}).
\end{equation}

Second, one may approximate likelihoods using the following:
\begin{equation}
\hat{L}_{\theta}(y_{1}) = N_X^{-1}\sum_{j=1}^{N_X} \tilde{W}_{1,\theta}^{N,j} 
\hspace{10mm} 
\widehat{\frac{L_{\theta}(y_{1:t})  }{L_{\theta}(y_{1:t-1}) }} = N_X^{-1}\sum_{j=1}^{N_X} \tilde{W}_{1,\theta}^{N,j}.
\end{equation}

There is also the option of approximating expectations after resampling has been performed:
\begin{equation}
\check{\phi}_{N_X, \theta,y_{1:t}}(f) :=  N_X^{-1} \sum_{j=1}^{N_X} f(X_t^{N,j}).
\end{equation}
However, as shown in Theorem~\ref{thm:sisrclt}, this approximation has a larger asymptotic variance.

\subsection{Averaging Particle filters}

Particle filters are useful for state inference for a broad variety of state-space models. However, assuming that the parameters of a model are known is often not suitable.

Averaging the output from many different particle filters, all with randomly chosen static parameters, mitigates the effect that parameter uncertainty has on these estimates. Algorithm \ref{alg:swarm} describes the process whereby each particle filter is instantiated with a randomly chosen parameter vector, then run through the time series of observable quantities $y_{1:T}$, and at each point the predictions for the future are averaged. Also, no particle filter needs to communicate with any other, which facilitates parallel implementations. 

At the beginning of the algorithm, each parameter vector is sampled from a chosen distribution $\rho(d\theta) : \mathcal{T} \to [0,1]$. The outputs of all these particle filters are averaged together at each time point, with each summand being weighted according to an evaluation of the Radon-Nikodym derivative. 

\begin{algorithm} 
	\caption{Particle Swarm Filter} 
	\label{alg:swarm}
	\begin{algorithmic}[1]
		\For {$i=1,2,\ldots,N_{\theta}$}
			\State Draw $\theta^i \sim \rho(\cdot)$
			\State Calculate $\frac{d\pi}{d\rho}(\theta^i)$
		\EndFor
		\For {$t=1,2,\ldots$}
			\If{$t$ equals $1$}
				\For {$i=1,2,\ldots,N_{\theta}$}
				\For {$j=(i-1)N_X + 1 \le j \le i N_X$}
					\State Draw $\tilde{X}^{N,j}_1 \sim Q_{\theta^i,y_1}(dx_1)$
					\State Calculate $\tilde{W}_{1,\theta^i}^{N,j} = \frac{d T_{\theta^i,y_1}( \cdot)}{dQ_{\theta^i,y_1}( \cdot)}(\tilde{X}_1^{N,j})$
				\EndFor
					\State Optionally calculate $\hat{\phi}_{N_X, \theta^i,y_{1}}(f) $ and/or $\hat{L}_{\theta^i}(y_{1})$
				\For {$j=(i-1)N_X + 1 \le j \le i N_X$}
				
					\State Draw $I_1^{N,j}$ with probability $P(I_1^{N,j} = j') \propto \tilde{W}_{1,\theta^i}^{N,j'}$ 
					\State Set $X_1^j = \tilde{X}_1^{N,I_1^{N,j}}$ 
				\EndFor
				\EndFor
					\State Optionally calculate $\widehat{\pi\phi}_{y_{1}}(f)$ and/or $\widehat{\pi[L_{\theta}(y_{1})]}$			\Else  
				\For {$i=1,2,\ldots,N_{\theta}$}
				\For {$j=(i-1)N_X + 1 \le j \le i N_X$}				
					\State Draw $\tilde{X}_t^{N,j} \sim Q_{\theta^i,y_t}(x_{t-1}^j,\cdot)$
					\State Calculate $\tilde{W}_{t,\theta^i}^{N,j} = \frac{d T_{\theta^i,y_t}(x_{t-1}^{N,j}, \cdot)}{dQ_{\theta^i,y_t}(x_{t-1}^{N,j}, \cdot)}(\tilde{X}_t^{N,j})$
					\State Calculate $\hat{\phi}_{N_X, \theta^i,y_{1:t}}(f) := \sum_{j=(i-1)N_X + 1}^{i N_X} \frac{\tilde{W}_{t,\theta^i}^{N,j}}{\sum_{j'} \tilde{W}_{t,\theta^i}^{N,j'}} f(\tilde{X}_t^{N,j})$
				\EndFor
				\For {$j=1,2,\ldots,N_X$}
					\State Draw $I_t^{N,j}$ with probability $P(I_t^{N,j} = j') \propto \tilde{W}_{t,\theta^i}^{N,j'}$ 
					\State Set $X_t^{N,j} = \tilde{X}_t^{N,I_t^{N,j}}$
				\EndFor 
				\EndFor
				\State Optionally calculate $\widehat{\pi\phi}_{y_{1:t}}(f)$ and/or $\widehat{\pi[L_{\theta}(y_{1:t})]}$

			\EndIf
		\EndFor
	\end{algorithmic} 
\end{algorithm}

In its current form, equation~\ref{swarm_approx} provides approximations to expectations taken with respect to a sort of ``marginal filtering distribution." Following the strategy mentioned in subsection \ref{qoi}, using the particular $f(x_t,\theta) = \mathbb{E}[y_{t+1} \mid x_t, \theta ]$ provides forecasts according to an approximation of the posterior predictive distributions.

\begin{align}
\widehat{\pi\phi}_{y_{1:t}}(f) &:= N_{\theta}^{-1} \sum_{i=1}^{N_{\theta}} \frac{d\pi}{d\rho}(\theta^i) \hat{\phi}_{N_X, \theta^i, y_{1:t}}(f)  \label{swarm_approx} \\
\hat{\phi}_{N_X, \theta^i,y_{1:t}}(f) &:= \sum_{j=(i-1)N_X+1}^{iN_X} \frac{\tilde{W}_{t,\theta^i}^{N,j}}{\sum_{j'} \tilde{W}_{t,\theta^i}^{N,j'}} f(\tilde{X}_t^{N,j})
\end{align}

Even though it is not of central interest in this paper, it should be noted that it is also relatively easy to calculate an approximation to the marginal likelihood by pooling conditional likelihood estimates from each particle filter \cite{issquared}:
\begin{align}
\widehat{\pi[L_{\theta}(y_{1:T})]} &:= N_{\theta}^{-1} \sum_{i=1}^{N_{\theta}} \frac{d\pi}{d\rho}(\theta^i) \hat{L}_{\theta^i}(y_{1}) \prod_{t=2}^T \widehat{\frac{L_{\theta^i}(y_{1:t})  }{L_{\theta^i}(y_{1:t-1}) }}.
\end{align}


\section{Theoretical Results}\label{sec:2}

Theorems \ref{thm:pswarmconsistency} and \ref{thm:pswarmclt1} justify the use of this algorithm in the regime where $N_{\theta} \to \infty$ and $N_X \to \infty$. They require strong assumptions, and in particular, this algorithm would not be suitable in situations where the parameter space cannot be bounded. 

%
%

\subsection{Assumptions and Definitions}

This subsection provides some assumptions and definitions that are used in subsequent proofs. First, we define two filtrations that represent the information available at the end of the mutation step of time $t$, and the set of information available after the resampling step of time $t$, respectively. 

Before the algorithm starts, the available information is represented as the trivial sigma-field: $\mathcal{F}_{\theta,0}^N = \{ \Omega, \emptyset \}$. For $t \ge 1$ define 

\begin{equation}
\tilde{\mathcal{F}}_{\theta,t}^N = \mathcal{F}^N_{\theta,t-1} \bigvee \sigma \left( \tilde{X}_t^{N,1}, \ldots, \tilde{X}_t^{N,N_X} \right)
\end{equation}

and

\begin{equation}
\mathcal{F}_{\theta,t}^N = \tilde{\mathcal{F}}^N_{\theta,t} \bigvee \sigma \left(I_t^{N,1}, \ldots, I_t^{N,N_X} \right).
\end{equation}

These two filtrations should not be confused with the single sigma-field $\mathcal{F}_{\theta}^N$, which is the sigma-field generated by the $N_{\theta}$ parameter samples at the very beginning of the algorithm. This will be necessary only in the proof of Theorem~\ref{thm:pswarmclt1}.

Next are the assumptions used for all subsequent theorems. Broadly speaking, assumptions \ref{assmp1}-\ref{assmp5} are used for consistency and asymptotic normality results for individual particle filters, and the addition of assumptions \ref{assmp6}-\ref{assmp9} are required for results regarding algorithm~\ref{alg:swarm}. Proofs for all theorems are available in \ref{Appendix}.

Assumptions \ref{assmp1} and \ref{assmp2} restrict the denominators of fractions that must be positive. Assumption~\ref{assmp3} is used to show that $T_{\theta,y_t}(x_{t-1}, 1)$ is finite. Assumptions \ref{assmp4} and \ref{assmp5} are useful for dealing with the unnormalized weights found in both algorithms. Assumptions \ref{assmp6} and \ref{assmp7} allow us to give ``well-behaved" weights to each particle filter. 

Assumptions \ref{assmp8} and \ref{assmp9} are used to show consistency of algorithm~\ref{alg:swarm} by improving the convergence in probability to a {\it uniform} convergence in probability of each particle filter's estimate. These are used in conjunction with the theorem provided in \cite{ucip}, which is useful in other contexts such as, for example, estimating the parameters of nonlinear regression model.

\begin{assumption}{}\label{assmp1}
For any $\theta \in \Theta$, $\mu_{\theta}(g_{\theta,y_1})> 0$.
\end{assumption}

\begin{assumption}{}\label{assmp2}
For any $\theta \in \Theta$, any $x_{t-1} \in \mathsf{X}$, and all $t \ge 2$, $T_{\theta,y_t}(x_{t-1}, 1)> 0$ 
\end{assumption}

\begin{assumption}{}\label{assmp3}
For any $\theta \in \Theta$ and all $t \ge 1$, $||g_{\theta,y_t}||_{\infty} < \infty$.
\end{assumption}

\begin{assumption}{}\label{assmp4}
For any $\theta \in \Theta$, all $x_{t-1} \in \mathsf{X}$, $T_{\theta, y_1}(\cdot) \ll Q_{\theta,y_1}(\cdot)$ and $T_{\theta,y_t}(x_{t-1}, \cdot) \ll Q_{\theta,y_{t}}(x_{t-1}, \cdot)$ for $t \ge 2$.
\end{assumption}

\begin{assumption}{}\label{assmp5}
For any $\theta \in \Theta$, and for all $(x_{t-1},x_t) \in \mathsf{X}^{2}$, there exist positive versions of the two Radon-Nikodym derivative such that they are bounded. In other words

$$
\sup_{(x_{t-1},x_t)} \frac{d T_{\theta,y_t}(x_{t-1}, \cdot)}{dQ_{\theta,y_t}(x_{t-1}, \cdot)}(x_t) < \infty
$$

$$
\sup_{x_1} \frac{d T_{\theta,y_1}( \cdot)}{dQ_{\theta,y_1}( \cdot)}(x_1) < \infty
$$
\end{assumption}

\begin{assumption}{}\label{assmp6}
$\pi \ll \rho$.
\end{assumption}

\begin{assumption}{}\label{assmp7}
For any $\theta \in \Theta$, there exists a positive version of $\frac{d\pi}{d\rho}$ such that it's bounded. in other words,
$\sup_{\theta \in \Theta} \frac{d\pi}{d\rho}( \theta) < \infty$.
\end{assumption}

\begin{assumption}{}\label{assmp8}
The parameter space $\Theta$ is compact.
\end{assumption}

\begin{assumption}{}\label{assmp9}
For each $t \ge 1$, $\hat{\phi}_{N_X, \theta, y_{1:t}}(f)$ is stochastically equicontinuous.
\end{assumption}

It should be mentioned that \cite{ucip} provides several sufficient conditions for assumption~\ref{assmp9} to hold. It may also be shown on a case-by-case basis for any particular model of interest, but it appears to be difficult to verify in general modeling situations. Even for small changes in parameter values, the samples and normalized weights for each particle index can be quite large, making it difficult to bound above any absolute difference in two of these estimators. 

However, for two parameters $\theta_1, \theta_2 \in \Theta$, adding and subtract the common target and using the triangle inequality produces an upper bound for the absolute difference in two estimators:
\begin{equation}\label{expr:split}
|\hat{\phi}_{N_X, \theta_1,y_{1:t}}(f_2) - \phi_{\theta_2, y_{1:t}}(f_2)| + |\hat{\phi}_{N_X, \theta_2, y_{1:t}}(f_2) - \phi_{\theta_2, y_{1:t}}(f_2)|.
\end{equation}
Exponential inequalities \citep[Chapter-9]{iihmm} can then be brought to bear. A demonstration that it holds for the particular stochastic volatility model considered in section~\ref{sec:sim_study} is provided. 

Assumption~\ref{assmp8} is indeed restrictive. For any state-space model whose parameter space is noncompact, the use of algorithm~\ref{alg:swarm} will require informative priors.


\subsection{Consistency for SISR}

Theorem~\ref{thm:sisr} shows that approximations to expectations with respect to each time's filtering distribution are consistent. These are well-known results that are useful for state-space models whose parameters are known, and they will be used in the proofs to all subsequent results. The proof given in \ref{Appendix} is simply a rearrangement of results provided in \citep[Chapter-9]{iihmm}.

Theorem~\ref{thm:sisr} depends on lemmas \ref{lem:mutation} and \ref{lem:resampling}, which in turn depend on the primary workhorse \cite[Proposition 9.5.7]{iihmm}. A transcription of this is given in \ref{Appendix}. It is used to show that the resampling steps of algorithm~\ref{alg:sisr} preserve consistency, and that the mutation steps of algorithm~\ref{alg:sisr}, after changing the target expectations, also preserve consistency. In each application, the triangular array and filtration sequence it mentions are modified. 


\begin{lemma}\label{lem:mutation}
Under assumptions \ref{assmp1}-\ref{assmp4}, for any $\theta \in \Theta$ and $t \ge 2$, if $\check{\phi}_{N_X, \theta,y_{1:t-1}}(f')$ is consistent for any $f' \in L^1(\mathsf{X}, \phi_{\theta,y_{1:t-1}})$, then $\hat{\phi}_{N_X, \theta,y_{1:t}}(f)$ is consistent for any $f \in L^1(\mathsf{X}, \phi_{\theta,y_{1:t}})$.
\end{lemma}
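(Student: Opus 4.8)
The plan is to route everything through the filtering recursion $\phi_{\theta, y_{1:t}}(f) = \phi_{\theta,y_{1:t-1}}(T_{\theta,y_t}(\cdot, f))\,/\,\phi_{\theta,y_{1:t-1}}(T_{\theta,y_t}(\cdot, 1))$ together with the self-normalized representation
\[
\hat{\phi}_{N_X, \theta,y_{1:t}}(f) = \frac{N_X^{-1} \sum_{j} \tilde{W}_{t,\theta}^{N,j} f(\tilde{X}_t^{N,j})}{N_X^{-1} \sum_{j} \tilde{W}_{t,\theta}^{N,j}} .
\]
First I would establish a limit in probability for the numerator, then the same for the denominator (the $f \equiv 1$ case), and then combine the two by a Slutsky-type argument; this last step is legitimate because Assumptions~\ref{assmp2} and \ref{assmp3} give $0 < \phi_{\theta,y_{1:t-1}}(T_{\theta,y_t}(\cdot,1)) \le ||g_{\theta,y_t}||_{\infty} < \infty$, so the denominator converges in probability to a strictly positive constant.

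Before that I would record the integrability bookkeeping. By Assumption~\ref{assmp3}, $T_{\theta,y_t}(\cdot,1) = F_{\theta}(\cdot, g_{\theta,y_t}) \le ||g_{\theta,y_t}||_{\infty}$, so $T_{\theta,y_t}(\cdot,1) \in L^1(\mathsf{X}, \phi_{\theta,y_{1:t-1}})$; and applying the recursion to $|f|$ gives $\phi_{\theta,y_{1:t-1}}(T_{\theta,y_t}(\cdot,|f|)) = \phi_{\theta,y_{1:t}}(|f|)\,\phi_{\theta,y_{1:t-1}}(T_{\theta,y_t}(\cdot,1))$, so the hypothesis $f \in L^1(\mathsf{X},\phi_{\theta,y_{1:t}})$ is exactly equivalent to $T_{\theta,y_t}(\cdot,f) \in L^1(\mathsf{X},\phi_{\theta,y_{1:t-1}})$. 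Assumption~\ref{assmp4} guarantees that the Radon-Nikodym derivative defining $\tilde{W}_{t,\theta}^{N,j}$ exists, and Assumptions~\ref{assmp1}--\ref{assmp2} (invoked at the earlier time steps) that $\phi_{\theta,y_{1:s}}$ is a well-defined probability measure for $s \le t$.

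For the numerator I would apply the primary workhorse, \cite[Proposition 9.5.7]{iihmm}, to the triangular array $\{ \tilde{W}_{t,\theta}^{N,j} f(\tilde{X}_t^{N,j}) \}_{1 \le j \le N_X}$ with conditioning sigma-field $\mathcal{F}_{\theta,t-1}^N$. Conditionally on $\mathcal{F}_{\theta,t-1}^N$ the variables $\tilde{X}_t^{N,j}$ are independent with $\tilde{X}_t^{N,j} \sim Q_{\theta,y_t}(X_{t-1}^{N,j},\cdot)$, which supplies the conditional-independence requirement; moreover the change-of-measure identity $\mathbb{E}[ \tilde{W}_{t,\theta}^{N,j} f(\tilde{X}_t^{N,j}) \mid \mathcal{F}_{\theta,t-1}^N ] = T_{\theta,y_t}(X_{t-1}^{N,j},f)$ shows that the average of these conditional means is exactly $\check{\phi}_{N_X, \theta,y_{1:t-1}}(T_{\theta,y_t}(\cdot,f))$, which converges in probability to $\phi_{\theta,y_{1:t-1}}(T_{\theta,y_t}(\cdot,f))$ by the lemma's hypothesis applied to the $L^1$ function $f' = T_{\theta,y_t}(\cdot,f)$. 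The one remaining hypothesis of Proposition~9.5.7 is a conditional Lindeberg / uniform-integrability condition on the array; I would verify it by the usual truncation, splitting $f = f\mathbf{1}\{|f| \le K\} + f\mathbf{1}\{|f| > K\}$, making the tail contribution small by taking $K$ large (using $T_{\theta,y_t}(\cdot,|f|)\in L^1$ and the hypothesis), and bounding the truncated part by $K\,\tilde{W}_{t,\theta}^{N,j}\mathbf{1}\{\tilde{W}_{t,\theta}^{N,j}\ge c_{N}/K\}$ for a threshold $c_N$ that diverges with $N_X$, where $\mathbb{E}[\tilde{W}_{t,\theta}^{N,j}\mathbf{1}\{\tilde{W}_{t,\theta}^{N,j}\ge a\}\mid \mathcal{F}_{\theta,t-1}^N] \to 0$ as $a\to\infty$ for each fixed particle since the conditional mean $\mathbb{E}[\tilde{W}_{t,\theta}^{N,j}\mid \mathcal{F}_{\theta,t-1}^N] = T_{\theta,y_t}(X_{t-1}^{N,j},1)\le ||g_{\theta,y_t}||_{\infty}$ is finite, and the passage from a fixed level $a$ to the diverging threshold under $N_X^{-1}\sum_j$ is a monotonicity (diagonal) argument again using the hypothesis at each fixed level. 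Setting $f \equiv 1$ in the same argument gives $N_X^{-1}\sum_j \tilde{W}_{t,\theta}^{N,j} \to \phi_{\theta,y_{1:t-1}}(T_{\theta,y_t}(\cdot,1))$ in probability.

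Dividing the numerator limit by the strictly positive denominator limit then gives $\hat{\phi}_{N_X, \theta,y_{1:t}}(f) \to \phi_{\theta,y_{1:t-1}}(T_{\theta,y_t}(\cdot,f))\,/\,\phi_{\theta,y_{1:t-1}}(T_{\theta,y_t}(\cdot,1)) = \phi_{\theta,y_{1:t}}(f)$ in probability, i.e.\ the asserted consistency. I expect the main obstacle to be the verification of the conditional Lindeberg condition for the possibly-unbounded summands $\tilde{W}_{t,\theta}^{N,j} f(\tilde{X}_t^{N,j})$: since this lemma deliberately does not use Assumption~\ref{assmp5}, the control of large values cannot come from a uniform bound on the weights and must instead be extracted from the finiteness of the unnormalized transition kernel (Assumption~\ref{assmp3}) together with the $L^1$ assumption on $f$, via the truncation-and-diagonal device just sketched.
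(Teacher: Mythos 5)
Your proposal is correct and follows essentially the same route as the paper: the same numerator/denominator decomposition with centering at the conditional expectations $T_{\theta,y_t}(X_{t-1}^{N,j},f)$, the same appeal to \cite[Proposition 9.5.7]{iihmm} with conditioning on $\mathcal{F}_{\theta,t-1}^N$, the same use of the induction hypothesis applied to $f' = T_{\theta,y_t}(\cdot,f)$ and to the constant function for the denominator, and a concluding Slutsky step. Your truncation/dominated-convergence verification of the conditional Lindeberg condition is in fact spelled out more carefully than in the paper, which only gestures at ``a dominated convergence argument that will also be used in the proof of Lemma~\ref{lem:resampling}.''
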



\begin{lemma}\label{lem:resampling}
Under assumptions \ref{assmp4} and \ref{assmp5}, for any $\theta \in \Theta$, if $\hat{\phi}_{N_X, \theta,y_{1:t}}(f)$ is consistent for any $f \in L^1(\mathsf{X}, \phi_{\theta,y_{1:t}})$, then $\check{\phi}_{N_X, \theta,y_{1:t}}(f)$ is consistent for the same class of functions.
\end{lemma}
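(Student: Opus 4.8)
The plan is to show that the resampling step, which replaces the weighted empirical measure $\hat{\phi}_{N_X,\theta,y_{1:t}}$ by the unweighted empirical measure $\check{\phi}_{N_X,\theta,y_{1:t}}$ of conditionally i.i.d. draws, preserves consistency. The natural device is the cited \cite[Proposition 9.5.7]{iihmm} applied with the triangular array $\{X_t^{N,j}\}_{1\le j\le N_X}$ (the post-resampling particles), the filtration $\{\mathcal{F}_{\theta,t}^N\}$, and the conditioning sigma-field $\tilde{\mathcal{F}}_{\theta,t}^N$. Conditionally on $\tilde{\mathcal{F}}_{\theta,t}^N$, each $X_t^{N,j}$ is drawn from the normalized weighted empirical distribution $\sum_{j'} \bigl(\tilde{W}_{t,\theta}^{N,j'} / \sum_{j''}\tilde{W}_{t,\theta}^{N,j''}\bigr)\delta_{\tilde{X}_t^{N,j'}}$, so the key identity is $\mathbb{E}\bigl[f(X_t^{N,j}) \mid \tilde{\mathcal{F}}_{\theta,t}^N\bigr] = \hat{\phi}_{N_X,\theta,y_{1:t}}(f)$ for each $j$. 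Averaging over $j$ and invoking the proposition's conclusion then gives $\check{\phi}_{N_X,\theta,y_{1:t}}(f) \xrightarrow{\mathbb{P}} \hat{\phi}_{N_X,\theta,y_{1:t}}(f)$, and combining this with the hypothesized consistency of $\hat{\phi}_{N_X,\theta,y_{1:t}}(f)$ for $\phi_{\theta,y_{1:t}}(f)$ (via Slutsky / the triangle inequality) yields the claim.

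First I would fix $\theta \in \Theta$ and $f \in L^1(\mathsf{X},\phi_{\theta,y_{1:t}})$, and set up the array and filtrations precisely as above, noting that $\hat{\phi}_{N_X,\theta,y_{1:t}}(f)$ is a bona fide (random) probability-weighted average so the conditional expectation identity holds for any integrable $f$. Second, I would check the hypotheses of \cite[Proposition 9.5.7]{iihmm}: typically one needs (i) the conditional-mean convergence, i.e. $\hat{\phi}_{N_X,\theta,y_{1:t}}(f) \xrightarrow{\mathbb{P}} \phi_{\theta,y_{1:t}}(f)$, which is exactly the hypothesis of the lemma, and (ii) a negligibility/uniform-integrability-type condition on the conditional second moments or on truncated versions of $f$ — this is where Assumptions \ref{assmp4} and \ref{assmp5} enter, since boundedness of the Radon–Nikodym derivatives $dT_{\theta,y_t}(x_{t-1},\cdot)/dQ_{\theta,y_t}(x_{t-1},\cdot)$ (and the first-time analogue) controls the normalized weights and hence prevents the weighted empirical measure from concentrating pathologically, letting one pass from bounded $f$ to general $f \in L^1$ by truncation. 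Third, I would assemble the pieces: apply the proposition to conclude $\check{\phi}_{N_X,\theta,y_{1:t}}(f) - \hat{\phi}_{N_X,\theta,y_{1:t}}(f) \xrightarrow{\mathbb{P}} 0$, then add and subtract $\hat{\phi}_{N_X,\theta,y_{1:t}}(f)$ and use the triangle inequality with the lemma's hypothesis.

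The main obstacle I anticipate is the reduction from bounded test functions to arbitrary $f \in L^1(\mathsf{X},\phi_{\theta,y_{1:t}})$. The cited proposition is usually stated for bounded $f$, so for unbounded $f$ one must argue that the truncation $f_M = f\cdot\mathbf{1}_{\{|f|\le M\}}$ approximates $f$ uniformly well in the relevant sense across both empirical measures simultaneously. For $\check{\phi}$ this is delicate because the post-resampling particles can duplicate high-weight proposals; one controls $\check{\phi}_{N_X,\theta,y_{1:t}}(|f|\mathbf{1}_{\{|f|>M\}})$ in conditional expectation by $\hat{\phi}_{N_X,\theta,y_{1:t}}(|f|\mathbf{1}_{\{|f|>M\}})$, which in turn is dominated — using Assumption \ref{assmp5} to bound the weights and hence relate $\hat{\phi}$ to the proposal's empirical measure — by something converging to $\phi_{\theta,y_{1:t}}(|f|\mathbf{1}_{\{|f|>M\}})$, which is small for large $M$ by dominated convergence. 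Making this uniform-in-$N_X$ tail bound rigorous, and interlacing it with the $M\to\infty$ limit, is the part requiring genuine care; everything else is bookkeeping with the filtrations and a standard Slutsky argument.
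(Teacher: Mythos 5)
Your proposal is correct and follows essentially the same route as the paper: the same decomposition $\check{\phi} = (\check{\phi} - \hat{\phi}) + \hat{\phi}$, the same application of \cite[Proposition 9.5.7]{iihmm} to the array $\{N_X^{-1}f(X_t^{N,j})\}$ conditioned on $\tilde{\mathcal{F}}_{\theta,t}^N$ with the identity $\mathbb{E}[f(X_t^{N,j}) \mid \tilde{\mathcal{F}}_{\theta,t}^N] = \hat{\phi}_{N_X,\theta,y_{1:t}}(f)$, and the same truncation-plus-dominated-convergence argument for the negligibility condition (your worry about unbounded $f$ is handled in the paper exactly as you describe, inside the verification of the Lindeberg-type condition with the cutoff $C_\epsilon \le N_X\epsilon$).
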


The following theorem uses lemmas \ref{lem:mutation} and \ref{lem:resampling} to prove inductively consistency for both estimators at each time step. 

\begin{theorem}\label{thm:sisr}
For any $\theta \in \Theta$, $t \ge 1$ and $f \in L^1(\mathsf{X}, \phi_{\theta,y_{1:t}})$, under assumptions \ref{assmp1}-\ref{assmp5},  

\begin{align*}
\hat{\phi}_{N_X, \theta,y_{1:t}}(f) &:= \sum_{j=1}^{N_X} \frac{\tilde{W}_{t,\theta}^{N,j}}{\sum_{j'} \tilde{W}_{t,\theta}^{N,j'}} f(\tilde{X}_t^{N,j}), \\
\check{\phi}_{N_X, \theta,y_{1:t}}(f) &:=  N_X^{-1} \sum_{j=1}^{N_X} f(X_t^{N,j})
\end{align*}
converge in probability to $\phi_{\theta,y_{1:t}}(f)$ as $N_X \to \infty$.

\end{theorem}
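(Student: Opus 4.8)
The plan is to prove the two consistency claims simultaneously by induction on $t$, using Lemmas~\ref{lem:mutation} and \ref{lem:resampling} as the inductive machinery and Proposition~9.5.7 of \cite{iihmm} (as transcribed in the appendix) to handle the base case. The inductive statement to carry is: \emph{for every $f \in L^1(\mathsf{X}, \phi_{\theta,y_{1:t}})$, both $\hat{\phi}_{N_X, \theta,y_{1:t}}(f)$ and $\check{\phi}_{N_X, \theta,y_{1:t}}(f)$ converge in probability to $\phi_{\theta,y_{1:t}}(f)$ as $N_X \to \infty$.} Note the function class shifts with $t$; I will need to be mildly careful that the class $L^1(\mathsf{X},\phi_{\theta,y_{1:t-1}})$ invoked when applying Lemma~\ref{lem:mutation} is large enough, which it is because the functions fed into that lemma are the ones arising from the recursive reweighting and are integrable against $\phi_{\theta,y_{1:t-1}}$ under assumptions \ref{assmp3}-\ref{assmp5}.

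First I would establish the base case $t=1$. Here $\tilde{X}_1^{N,1},\dots,\tilde{X}_1^{N,N_X}$ are i.i.d.\ draws from $Q_{\theta,y_1}$ with weights $\tilde{W}_{1,\theta}^{N,j} = \frac{dT_{\theta,y_1}}{dQ_{\theta,y_1}}(\tilde{X}_1^{N,j})$, so by the law of large numbers (or directly by the cited Proposition applied to the trivial initial filtration $\mathcal{F}_{\theta,0}^N$) the self-normalized estimator $\hat{\phi}_{N_X,\theta,y_1}(f)$ converges in probability to $\mu_\theta(g_{\theta,y_1}f)/\mu_\theta(g_{\theta,y_1}) = \phi_{\theta,y_1}(f)$; the denominator is positive by assumption~\ref{assmp1} and the ratio of $T$ to $Q$ is well-defined and bounded by assumptions \ref{assmp4}-\ref{assmp5}. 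Then Lemma~\ref{lem:resampling} (which only needs \ref{assmp4} and \ref{assmp5} and a consistent $\hat{\phi}$) upgrades this to consistency of $\check{\phi}_{N_X,\theta,y_1}(f)$, closing the base case for both estimators.

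For the inductive step, suppose the claim holds at time $t-1$; in particular $\check{\phi}_{N_X,\theta,y_{1:t-1}}(f')$ is consistent for all $f' \in L^1(\mathsf{X},\phi_{\theta,y_{1:t-1}})$. By Lemma~\ref{lem:mutation} (under \ref{assmp1}-\ref{assmp4}) this implies $\hat{\phi}_{N_X,\theta,y_{1:t}}(f)$ is consistent for all $f \in L^1(\mathsf{X},\phi_{\theta,y_{1:t}})$, and then Lemma~\ref{lem:resampling} (under \ref{assmp4},\ref{assmp5}) gives consistency of $\check{\phi}_{N_X,\theta,y_{1:t}}(f)$ for the same class. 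This closes the induction, and since $\check{\phi}$ at time $t$ feeds the next mutation step, the recursion is self-sustaining for all $t\ge 1$, which is exactly the statement.

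The only real obstacle is bookkeeping around the shifting function classes and verifying that the intermediate functions produced by the filtering recursion (essentially $f(x_t)\mapsto T_{\theta,y_t}(x_{t-1},f)/T_{\theta,y_t}(x_{t-1},1)$-type maps) stay inside $L^1$ of the appropriate previous filter so that the hypotheses of Lemma~\ref{lem:mutation} genuinely apply at each step; this is where assumptions \ref{assmp2} (positivity of $T_{\theta,y_t}(x_{t-1},1)$), \ref{assmp3} (boundedness of $g_{\theta,y_t}$), and \ref{assmp5} (bounded Radon--Nikodym derivatives) do the work. Everything else is a direct chaining of the two lemmas, so the proof in the appendix should amount to little more than stating the induction cleanly and pointing to where each assumption is consumed.
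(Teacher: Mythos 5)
Your proposal is correct and follows essentially the same route as the paper's own proof: the base case $t=1$ via the weak law of large numbers for the self-normalized importance-sampling estimator, followed by Lemma~\ref{lem:resampling} to handle resampling, and then an induction alternating Lemmas~\ref{lem:mutation} and~\ref{lem:resampling} for $t \ge 2$. The extra bookkeeping you flag about the shifting $L^1$ function classes is handled inside the proofs of those lemmas rather than in the theorem's proof itself, so nothing further is needed.
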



\subsection{Consistency for the Particle Swarm Filter}

Theorem~\ref{thm:pswarmconsistency} guarantees consistency of estimates of expectations taken with respect the ``marginal filtering distribution" when all nine assumptions hold. 

\begin{theorem}\label{thm:pswarmconsistency}
In algorithm~\ref{alg:swarm}, for any $t \ge 1$ and any $f \in L^1(\mathsf{X}, \phi_{\theta,y_{1:t}})$ such that $\phi_{\theta,y_{1:t}}(f) \in L^1(\Theta, \pi )$, under assumptions \ref{assmp1}-\ref{assmp9}, 
\begin{equation*}
\widehat{\pi\phi}_{y_{1:t}}(f) := N_{\theta}^{-1} \sum_{i=1}^{N_{\theta}} \frac{d\pi}{d\rho}(\theta^i) \hat{\phi}_{N_X, \theta^i, y_{1:t}}(f)  
\overset{\text{p}}{\to} 
\pi\phi_{y_{1:t}}(f)
\end{equation*}
as $N_X,N_{\theta} \to \infty$.
\end{theorem}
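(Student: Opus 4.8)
The plan is to decompose $\widehat{\pi\phi}_{y_{1:t}}(f)$ into a term that a standard law of large numbers handles and a remainder term that the uniform-in-$\theta$ convergence of the inner particle filters controls. Write
\begin{equation*}
\widehat{\pi\phi}_{y_{1:t}}(f) - \pi\phi_{y_{1:t}}(f)
= \underbrace{N_{\theta}^{-1}\sum_{i=1}^{N_{\theta}} \frac{d\pi}{d\rho}(\theta^i)\left[\hat{\phi}_{N_X,\theta^i,y_{1:t}}(f) - \phi_{\theta^i,y_{1:t}}(f)\right]}_{=:A_{N}}
+ \underbrace{N_{\theta}^{-1}\sum_{i=1}^{N_{\theta}} \frac{d\pi}{d\rho}(\theta^i)\phi_{\theta^i,y_{1:t}}(f) - \pi\phi_{y_{1:t}}(f)}_{=:B_{N_{\theta}}}.
\end{equation*}
For $B_{N_{\theta}}$, note that $\theta^1,\ldots,\theta^{N_{\theta}}$ are i.i.d.\ from $\rho$, the summand $\theta \mapsto \frac{d\pi}{d\rho}(\theta)\phi_{\theta,y_{1:t}}(f)$ has expectation under $\rho$ equal to $\int \frac{d\pi}{d\rho}(\theta)\phi_{\theta,y_{1:t}}(f)\,\rho(d\theta) = \int \phi_{\theta,y_{1:t}}(f)\,\pi(d\theta) = \pi\phi_{y_{1:t}}(f)$ (using assumption~\ref{assmp6}), and this expectation is finite because $\phi_{\theta,y_{1:t}}(f) \in L^1(\Theta,\pi)$ by hypothesis. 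Hence the weak (or strong) law of large numbers gives $B_{N_{\theta}} \overset{\mathrm{p}}{\to} 0$ as $N_{\theta}\to\infty$, with $N_X$ playing no role here.

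For $A_N$, first bound $|A_N| \le \left(\sup_{\theta\in\Theta}\frac{d\pi}{d\rho}(\theta)\right)\cdot \sup_{\theta\in\Theta}\left|\hat{\phi}_{N_X,\theta,y_{1:t}}(f) - \phi_{\theta,y_{1:t}}(f)\right|$, where the supremum over the weights is finite by assumption~\ref{assmp7}. The remaining factor does not depend on $N_{\theta}$ at all, so it suffices to show $\sup_{\theta\in\Theta}|\hat{\phi}_{N_X,\theta,y_{1:t}}(f) - \phi_{\theta,y_{1:t}}(f)| \overset{\mathrm{p}}{\to} 0$ as $N_X\to\infty$. This is exactly where assumptions~\ref{assmp8} (compactness of $\Theta$) and~\ref{assmp9} (stochastic equicontinuity of $\theta\mapsto\hat{\phi}_{N_X,\theta,y_{1:t}}(f)$) come in: Theorem~\ref{thm:sisr} already gives pointwise-in-$\theta$ convergence in probability of $\hat{\phi}_{N_X,\theta,y_{1:t}}(f)$ to $\phi_{\theta,y_{1:t}}(f)$, and the limit $\theta\mapsto\phi_{\theta,y_{1:t}}(f)$ is continuous (this continuity itself needs a short argument from the recursion defining $\phi$ together with assumptions~\ref{assmp1}--\ref{assmp3}, or can be absorbed into the equicontinuity hypothesis); then the generic uniform-convergence-in-probability result of \cite{ucip}—pointwise convergence plus stochastic equicontinuity plus compact index set implies uniform convergence in probability—upgrades this to the desired supremum statement.

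Finally I would combine the two pieces. Given $\epsilon,\delta>0$, choose $N_X$ large enough that $\mathbb{P}\big(\sup_{\theta}|\hat\phi_{N_X,\theta,y_{1:t}}(f)-\phi_{\theta,y_{1:t}}(f)| > \epsilon/(2C)\big) < \delta/2$ where $C = \sup_\theta \frac{d\pi}{d\rho}(\theta)$, which forces $\mathbb{P}(|A_N|>\epsilon/2)<\delta/2$ uniformly in $N_{\theta}$; then choose $N_{\theta}$ large enough that $\mathbb{P}(|B_{N_{\theta}}|>\epsilon/2)<\delta/2$; a union bound finishes it. The main obstacle I anticipate is the $A_N$ term: making the interchange of the two limiting regimes rigorous requires that the inner convergence be uniform in $\theta$ \emph{and} that the rate be uniform enough not to be spoiled when averaging over the random $\theta^i$, which is why the argument is routed through a deterministic bound by $\sup_{\theta\in\Theta}|\cdot|$ and then through the machinery of \cite{ucip}; a secondary technical point is verifying (or carefully invoking) continuity of $\theta\mapsto\phi_{\theta,y_{1:t}}(f)$, without which "stochastic equicontinuity" does not by itself pin down a continuous limit.
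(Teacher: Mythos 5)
Your proposal is correct and follows essentially the same route as the paper's proof: the identical decomposition into the averaged particle-filter error term (bounded by $\sup_{\theta}\frac{d\pi}{d\rho}(\theta)\cdot\sup_{\theta}|\hat{\phi}_{N_X,\theta,y_{1:t}}(f)-\phi_{\theta,y_{1:t}}(f)|$ and handled via Theorem~\ref{thm:sisr}, assumptions~\ref{assmp8}--\ref{assmp9}, and the uniform-convergence-in-probability theorem of \cite{ucip}) plus the Monte Carlo term over $\theta$ (handled by the weak law of large numbers under assumption~\ref{assmp6} and the $L^1(\Theta,\pi)$ hypothesis). Your closing remark about needing continuity of $\theta\mapsto\phi_{\theta,y_{1:t}}(f)$ matches the paper's own caveat that, for the limit function, stochastic equicontinuity reduces to ordinary continuity on $\Theta$.
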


\begin{proof}[Proof of Theorem \ref{thm:pswarmconsistency}]
Let $\{\theta^i\}_{1 \le i \le N_{\theta}} \sim \rho$ denote the sample of parameter values used to instantiate $N_{\theta}$ particle filters. Using the triangle inequality, the overall estimator can be shown to be bounded above by the sum of two sequences that converge in probability to $0$:

\begin{align*}
0 &\le \left| \frac{1}{N_{\theta}} \sum_{i=1}^{N_{\theta}} \frac{d\pi}{d\rho}(\theta^i) \hat{\phi}_{N_X, \theta^i,y_{1:t}}(f) - [\pi \phi]_{y_{1:t}}(f) \right| \\
&\le \left|\frac{1}{N_{\theta}} \sum_{i=1}^{N_{\theta}} \frac{d\pi}{d\rho}(\theta^i) 
\hat{\phi}_{N_X, \theta^i,y_{1:t}}(f) 
- \frac{1}{N_{\theta} } \sum_{i=1}^{N_{\theta}} \frac{d\pi}{d\rho}(\theta^i) \phi_{\theta^i, y_{1:t}}(f) \right|  + \left| \frac{1}{N_{\theta} } \sum_{i=1}^{N_{\theta}} \frac{d\pi}{d\rho}(\theta^i)\phi_{\theta^i, y_{1:t}}(f)- [\pi \phi]_{y_{1:t}}(f) \right| \\
&\le \sup_{\theta \in \Theta} \frac{d\pi}{d\rho}(\theta) \sup_{\theta \in \Theta} \left| \hat{\phi}_{N_X, \theta,y_{1:t}}(f)  -   \phi_{\theta, y_{1:t}}(f) \right|  + \left| \frac{1}{N_{\theta} } \sum_{i=1}^{N_{\theta}} \frac{d\pi}{d\rho}(\theta^i) \phi_{\theta^i, y_{1:t}}(f)
- [\pi \phi]_{y_{1:t}}(f) \right|.
\end{align*}

$\sup_{\theta \in \Theta} \left| \frac{d\pi}{d\rho}(\theta) \right|$ is finite by assumptions \ref{assmp6} and \ref{assmp7}. The first summand then converges in probability to $0$ by \cite[Theorem 2.1]{ucip} and  assumptions \ref{assmp8}, \ref{assmp9} and Theorem \ref{thm:sisr}. The last term converges by the traditional weak law of large numbers, which holds because $\phi_{\theta,y_1}(f) \in L^1(\Theta,\pi)$, and because of assumption~\ref{assmp6}. 

Note that, in our application of \cite[Theorem 2.1]{ucip}, we have used the fact that $\phi_{\theta,y_{1:t}}(f)$ is stochastically equicontinuous. In this particular case, this boils down to traditional continuity in $\Theta$, owing to the fact that it does not rely on $N_X$.
\end{proof}


\subsection{Asymptotic Normality for SISR}


Lemmas \ref{lem:mutclt} and \ref{lem:resampclt} are used to prove Theorem~\ref{thm:sisrclt}, a useful result for using particle filters for state-space models with known parameters. This is another well-known result. The proof given in \ref{Appendix} is, again, just a rearrangement of results provided in \citep[Chapter-9]{iihmm}.

\begin{lemma}\label{lem:mutclt}
For any $\theta \in \Theta$ and $t \ge 2$, under assumptions \ref{assmp1}-\ref{assmp5}, if for any $f' \in L^2(\mathsf{X}, \phi_{\theta,y_{1:t-1}})$
\begin{equation}
N_X^{1/2}\left[\check{\phi}_{N_X, \theta,y_{1:t-1}}(f') - \phi_{\theta,y_{1:t-1}}(f') \right] \overset{\text{D}}{\to} 
\mathcal{N}\left(0, \sigma^2_{\theta,t-1}(f')\right), 
\end{equation}
as $N_X,N_\theta \to \infty$, and if $\check{\phi}_{N_X, \theta,y_{1:t-1}}(f'')$ converges in probability to $\phi_{\theta,y_{1:t-1}}(f'')$ for any $f'' \in L^1(\mathsf{X}, \phi_{\theta,y_{1:t-1}})$, then, for any $f \in L^2(\mathsf{X}, \phi_{\theta,y_{1:t}})$, $N_X^{1/2}\left[\hat{\phi}_{N_X, \theta,y_{1:t}}(f)- \phi_{\theta,y_{1:t}}(f) \right]$ is asymptotically normal with mean zero and variance 
\begin{equation}\label{eqn:asymp_var1}
\frac{\sigma^2_{\theta,t-1}\left( T_{\theta, y_{t}}  (x_{t-1},f)  \right) + \eta^2_{\theta,t-1}(f)}{ \left[ \phi_{\theta,y_{1:t-1}}(T_{\theta,y_t}(x_{t-1},1)) \right]^2 } ,
\end{equation}
where

\begin{align}\label{eta}
\eta^2_{\theta,t-1}(f) 
&:= 
\phi_{\theta,y_{1:t-1}} \left\{ \int Q_{\theta,y_t}(x_{t-1}, dx_t) \left( \frac{dT(x_{t-1}, \cdot)}{dQ(x_{t-1}, \cdot)}(x_t) \right)^2 f^2(x_t) \right\} \nonumber \\
&\hspace{2mm}-
\phi_{\theta,y_{1:t-1}} \left\{  T_{\theta,y_t}(x_{t-1},  f)^2 \right\}.
\end{align}
\end{lemma}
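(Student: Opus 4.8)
The plan is to prove Lemma~\ref{lem:mutclt} by tracking how the central limit theorem from time $t-1$ propagates through the mutation step of SISR, viewing the mutation as a two-part operation: (i) drawing the proposals $\tilde X_t^{N,j} \sim Q_{\theta,y_t}(x_{t-1}^{N,j},\cdot)$ conditionally on $\mathcal{F}_{\theta,t-1}^N$, and (ii) forming the self-normalized importance-sampling estimator $\hat\phi_{N_X,\theta,y_{1:t}}(f)$. The key algebraic identity is that the unnormalized estimator built from the new proposals and weights is, up to the recursion for filtering distributions, an approximation to $\phi_{\theta,y_{1:t-1}}(T_{\theta,y_t}(\cdot,f))$ in the numerator and $\phi_{\theta,y_{1:t-1}}(T_{\theta,y_t}(\cdot,1))$ in the denominator. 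So I would write
\begin{equation*}
\hat\phi_{N_X,\theta,y_{1:t}}(f) - \phi_{\theta,y_{1:t}}(f)
= \frac{N_X^{-1}\sum_j \tilde W_{t,\theta}^{N,j}\bigl[f(\tilde X_t^{N,j}) - \phi_{\theta,y_{1:t}}(f)\bigr]}{N_X^{-1}\sum_j \tilde W_{t,\theta}^{N,j}},
\end{equation*}
and analyze numerator and denominator separately after multiplying by $N_X^{1/2}$.

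The heart of the argument is to apply the workhorse \cite[Proposition 9.5.7]{iihmm}, as flagged in the paragraph preceding the lemma, with the triangular array taken to be $U_t^{N,j} := N_X^{-1/2}\tilde W_{t,\theta}^{N,j}\bigl[f(\tilde X_t^{N,j}) - \phi_{\theta,y_{1:t}}(f)\bigr]$ (and a companion array with $f$ replaced by $1$ for the denominator, or handled jointly by the Cramér--Wold device) and the filtration $\tilde{\mathcal{F}}_{\theta,t}^N$, with the conditioning sigma-field being $\mathcal{F}_{\theta,t-1}^N$. I would first compute the conditional expectation: by the tower property and the definition of the Radon--Nikodym weights, $\mathbb{E}[\tilde W_{t,\theta}^{N,j} h(\tilde X_t^{N,j}) \mid \mathcal{F}_{\theta,t-1}^N] = T_{\theta,y_t}(x_{t-1}^{N,j}, h)$, so the conditional mean of the sum is $\check\phi_{N_X,\theta,y_{1:t-1}}(T_{\theta,y_t}(\cdot,h))$, to which I apply the hypothesized CLT (for the $T_{\theta,y_t}(\cdot,f) - \phi_{\theta,y_{1:t}}(f)\,T_{\theta,y_t}(\cdot,1)$ combination) to get the $\sigma^2_{\theta,t-1}(\cdot)$ contribution; here the in-probability convergence hypothesis on $\check\phi$ handles the normalization/denominator. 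Then the conditional variance of the array is $\check\phi_{N_X,\theta,y_{1:t-1}}$ applied to $\int Q_{\theta,y_t}(x_{t-1},dx_t)(dT/dQ)^2 (f - c)^2 - T_{\theta,y_t}(x_{t-1},f-c)^2$ with $c = \phi_{\theta,y_{1:t}}(f)$, which converges in probability by the in-probability hypothesis on $\check\phi$ (needing assumption~\ref{assmp5} to ensure the relevant functions are in $L^1(\phi_{\theta,y_{1:t-1}})$), yielding exactly $\eta^2_{\theta,t-1}$ after expanding the square and using $\phi_{\theta,y_{1:t}}(f) = \phi_{\theta,y_{1:t-1}}(T_{\theta,y_t}(\cdot,f))/\phi_{\theta,y_{1:t-1}}(T_{\theta,y_t}(\cdot,1))$. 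The conditional Lindeberg condition follows from assumption~\ref{assmp3} (boundedness of $g_{\theta,y_t}$) and assumption~\ref{assmp5} (boundedness of the weights), which control the $\tilde W$ uniformly, plus the $1/N_X$ scaling. Assembling these via Slutsky and dividing by the denominator $N_X^{-1}\sum_j\tilde W_{t,\theta}^{N,j} \overset{p}{\to} \phi_{\theta,y_{1:t-1}}(T_{\theta,y_t}(\cdot,1))$ produces the stated variance in \eqref{eqn:asymp_var1}.

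The main obstacle I anticipate is the bookkeeping in decomposing the asymptotic variance so that the cross term between ``fluctuation of the conditional mean" (the $\sigma^2_{\theta,t-1}$ part) and ``within-proposal Monte Carlo noise" (the $\eta^2_{\theta,t-1}$ part) vanishes — this is exactly the content of \cite[Proposition 9.5.7]{iihmm}, which states the limit as the sum of the limit of the conditional-mean CLT and the limit of the conditional variance, but one has to set up the array so that its conditional mean is precisely the quantity to which the prior-step CLT applies, and simultaneously recenter by the correct constant $c = \phi_{\theta,y_{1:t}}(f)$ so the $f^2$-type term in $\eta^2$ comes out clean. A secondary technical point is verifying the integrability/domination required to move between $L^2$ and $L^1$ classes when invoking the in-probability hypothesis on $\check\phi$ for the variance functional; this is where assumptions~\ref{assmp3} and~\ref{assmp5} do the real work, since they bound $dT_{\theta,y_t}(x_{t-1},\cdot)/dQ_{\theta,y_t}(x_{t-1},\cdot)$ and $g_{\theta,y_t}$ uniformly, ensuring $\int Q_{\theta,y_t}(x_{t-1},dx_t)(dT/dQ)^2 f^2 \in L^1(\phi_{\theta,y_{1:t-1}})$ whenever $f \in L^2(\phi_{\theta,y_{1:t}})$.
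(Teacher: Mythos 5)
Your proposal is correct and follows essentially the same route as the paper: split the self-normalized estimator into the conditional-mean term (which is $\check{\phi}_{N_X,\theta,y_{1:t-1}}$ applied to $T_{\theta,y_t}(\cdot,f)$ and inherits the time-$(t-1)$ CLT) plus the conditionally centered array (which yields $\eta^2_{\theta,t-1}(f)$ via a conditional CLT and Lindeberg check), then divide out the weight normalizer by Slutsky; your recentering by $c=\phi_{\theta,y_{1:t}}(f)$ is equivalent to the paper's without-loss-of-generality reduction. The only slip is a label: the workhorse here is the conditional CLT (Proposition~\ref{prop9512}, i.e.\ Proposition 9.5.12 of \cite{iihmm}), not the law-of-large-numbers Proposition~\ref{prop957}, though what you describe is the content of the former.
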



\begin{lemma}\label{lem:resampclt}
For any $\theta \in \Theta$, $t \ge 1$ and $f \in L^2(\mathsf{X}, \phi_{\theta,y_{1:t}})$, under assumptions \ref{assmp4} and \ref{assmp5}, if $N_X^{1/2} \left\{ \hat{\phi}_{N_X, \theta,y_{1:t}}(f) - \phi_{\theta,y_{1:t}}(f) \right\}$ is asymptotically normal with mean $0$ and variance $\tilde{\sigma}^2_{\theta,t}(f)$, and if $\hat{\phi}_{N_X, \theta,y_{1:t}}(f'') \overset{\text{p}}{\to} \phi_{\theta,y_{1:t}}(f'')$ as $N_X,N_\theta \to \infty$ for any $f'' \in L^1(\mathsf{X}, \phi_{\theta,y_{1:t}})$, then 
\begin{equation*}
N_X^{1/2}\left\{ \check{\phi}_{N_X, \theta,y_{1:t}}(f)- \phi_{\theta,y_{1:t}}(f) \right\}
\overset{\text{D}}{\to}
\mathcal{N}\left(0, \tilde{\sigma}^2_{\theta,t}(f) + \phi_{\theta,y_{1:t}}(f^2) - \left[ \phi_{\theta,y_{1:t}}(f) \right]^2\right).
\end{equation*} 
\end{lemma}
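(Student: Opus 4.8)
The plan is to split the resampling error into the fluctuation introduced purely by the multinomial resampling draw and the error already present in the pre-resampling estimator, and then to show these two pieces are asymptotically independent Gaussians whose variances add. Assumptions \ref{assmp4}--\ref{assmp5} ensure the unnormalized weights, and hence the weighted empirical measure below, are well defined. Write
\begin{equation*}
N_X^{1/2}\left\{ \check{\phi}_{N_X, \theta,y_{1:t}}(f)- \phi_{\theta,y_{1:t}}(f) \right\} = A_N + B_N,
\end{equation*}
where $B_N := N_X^{1/2}\{\hat{\phi}_{N_X, \theta,y_{1:t}}(f)- \phi_{\theta,y_{1:t}}(f)\}$ and $A_N := N_X^{1/2}\{\check{\phi}_{N_X, \theta,y_{1:t}}(f)- \hat{\phi}_{N_X, \theta,y_{1:t}}(f)\}$. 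By hypothesis $B_N \overset{\text{D}}{\to} \mathcal{N}(0,\tilde{\sigma}^2_{\theta,t}(f))$, and $B_N$ is $\tilde{\mathcal{F}}^N_{\theta,t}$-measurable since it is built from the mutation-step samples and weights. Conditionally on $\tilde{\mathcal{F}}^N_{\theta,t}$, the resampled particles $X_t^{N,1},\dots,X_t^{N,N_X}$ are i.i.d.\ from the weighted empirical measure $\hat{\mu}_{N_X} := \sum_{j} \frac{\tilde{W}_{t,\theta}^{N,j}}{\sum_{j'}\tilde{W}_{t,\theta}^{N,j'}}\,\delta_{\tilde{X}_t^{N,j}}$, whose mean is exactly $\hat{\phi}_{N_X, \theta,y_{1:t}}(f)$, so $A_N$ is conditionally a normalized sum of i.i.d.\ centered terms.

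The first step is a conditional central limit theorem for $A_N$. Its per-summand conditional variance is $\hat{\phi}_{N_X, \theta,y_{1:t}}(f^2) - [\hat{\phi}_{N_X, \theta,y_{1:t}}(f)]^2$, which converges in probability to the deterministic limit $\phi_{\theta,y_{1:t}}(f^2) - [\phi_{\theta,y_{1:t}}(f)]^2$ by applying the consistency hypothesis to $f$ and to $f^2 \in L^1(\mathsf{X},\phi_{\theta,y_{1:t}})$. One must also verify a conditional Lindeberg condition, namely $\hat{\mu}_{N_X}\big((f-\hat{\phi}_{N_X,\theta,y_{1:t}}(f))^2\mathbf{1}\{|f-\hat{\phi}_{N_X,\theta,y_{1:t}}(f)|>\varepsilon N_X^{1/2}\}\big)\overset{\text{p}}{\to}0$ for each $\varepsilon>0$, which follows from $f^2\in L^1$ and consistency applied to the truncations $f^2\mathbf{1}\{|f|>c\}$, sending $c\to\infty$ after $N_X\to\infty$. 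This yields $\mathbb{E}[e^{\mathrm{i}uA_N}\mid \tilde{\mathcal{F}}^N_{\theta,t}] \overset{\text{p}}{\to} \exp\{-\tfrac12 u^2 v\}$ for each $u$, where $v := \phi_{\theta,y_{1:t}}(f^2)-[\phi_{\theta,y_{1:t}}(f)]^2$. (Equivalently, one invokes the central-limit analogue of the resampling workhorse \cite[Proposition 9.5.7]{iihmm} with triangular array $\{f(X_t^{N,j})\}$ and filtration $\tilde{\mathcal{F}}^N_{\theta,t}\subseteq\mathcal{F}^N_{\theta,t}$, exactly as for Theorem~\ref{thm:sisrclt}.)

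The second step combines the two pieces. Because the limiting conditional law of $A_N$ is a fixed Gaussian and $B_N$ is $\tilde{\mathcal{F}}^N_{\theta,t}$-measurable, conditioning on $\tilde{\mathcal{F}}^N_{\theta,t}$ gives
\begin{equation*}
\mathbb{E}\!\left[e^{\mathrm{i}u(A_N+B_N)}\right] = \mathbb{E}\!\left[e^{\mathrm{i}uB_N}\!\left(\mathbb{E}[e^{\mathrm{i}uA_N}\mid \tilde{\mathcal{F}}^N_{\theta,t}] - e^{-u^2 v/2}\right)\right] + e^{-u^2 v/2}\,\mathbb{E}\!\left[e^{\mathrm{i}uB_N}\right];
\end{equation*}
the first term tends to $0$ by bounded convergence (its integrand is bounded by $2$ and tends to $0$ in probability), and the second tends to $e^{-u^2 v/2}e^{-u^2 \tilde{\sigma}^2_{\theta,t}(f)/2}$ by the assumed weak convergence of $B_N$. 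Hence $A_N+B_N \overset{\text{D}}{\to} \mathcal{N}(0,\tilde{\sigma}^2_{\theta,t}(f)+v)$, which is the assertion.

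I expect the main obstacle to be the conditional Lindeberg step: since $f$ is only assumed square-integrable and the conditioning event is random, one must carefully justify that the truncated empirical second moments $\hat{\mu}_{N_X}(f^2\mathbf{1}\{|f|>c\})$ are uniformly small, legitimately interchanging the $N_X\to\infty$ and $c\to\infty$ limits. The fact that the conditional variance has a \emph{deterministic} limit (inherited from Theorem~\ref{thm:sisr}'s consistency) is precisely what makes the two variances add rather than the second Gaussian being mixed over a random scale.
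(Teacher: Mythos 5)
Your proposal is correct and follows essentially the same route as the paper: the same decomposition into the resampling fluctuation $A_N$ and the pre-resampling error $B_N$, the same conditional CLT for $A_N$ given $\tilde{\mathcal{F}}^N_{\theta,t}$ (the paper invokes its transcribed Proposition~\ref{prop9512} with exactly the truncation-plus-dominated-convergence Lindeberg argument you sketch), and the same conditioning argument on the joint characteristic function to show the two pieces are asymptotically independent so the variances add. Your write-up of the final combination step is in fact slightly more explicit than the paper's, but it is the same proof.
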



\begin{theorem}\label{thm:sisrclt}
For any $\theta \in \Theta$ and $t \ge 1$, under assumptions \ref{assmp1}-\ref{assmp5}, both

\begin{eqnarray*}
N_X^{1/2}\left\{ \hat{\phi}_{N_X, \theta,y_{1:t}}(f) -\phi_{\theta,y_{1:t}}(f)\right\}
\overset{\text{D}}{\to}
\mathcal{N}\left( 0 , \mathbb{V}_{\theta,1:t}(f)\right)
\\
N_X^{1/2} \left\{ \check{\phi}_{N_X, \theta,y_{1:t}}(f) - \phi_{\theta,y_{1:t}}(f)\right\}
\overset{\text{D}}{\to}
\mathcal{N}\left( 0 , \mathbb{V}^r_{\theta,1:t}(f) \right),
\end{eqnarray*}
where
\begin{eqnarray}
\mathbb{V}_{\theta,1:t}(f) = \frac{ \mathbb{V}_{\theta,1:t-1}(f) \left[ T_{\theta, y_{t}}  (x_{t-1},f)  \right] + \eta^2_{\theta,t-1}(f) }{ \left[ \phi_{\theta,y_{1:t-1}}(T(x_{t-1},1)) \right]^2 } \label{eqn:mut_asymp_var}\\
\mathbb{V}^r_{\theta,1:t}(f) = \mathbb{V}_{\theta,1:t}(f) + \phi_{\theta,y_{1:t}}(f^2) - \left[ \phi_{\theta,y_{1:t}}(f) \right]^2 \label{eqn:res_asymp_var} \\
\mathbb{V}_{\theta,1}(f) = \frac{ Q_{\theta,y_1}\left[ \left( \frac{dT_{\theta,y_1}}{dQ_{\theta,y_1}}(x_1) \right)^2 \left( f(x_1) - \phi_{\theta,y_1}(f) \right)^2  \right] }{ \left[T_{\theta,y_1}(1) \right]^2 }.\\
\end{eqnarray}

\end{theorem}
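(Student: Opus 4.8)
The plan is to prove both statements simultaneously by induction on $t$, mirroring the two‑stage structure of Algorithm~\ref{alg:sisr}: within each time step, Lemma~\ref{lem:mutclt} converts the resampled central limit theorem from time $t-1$ into the mutation‑stage central limit theorem at time $t$ (yielding \eqref{eqn:mut_asymp_var}), and then Lemma~\ref{lem:resampclt} converts that into the resampled central limit theorem at time $t$ (yielding \eqref{eqn:res_asymp_var}). Both lemmas also carry a consistency hypothesis — convergence in probability of the relevant estimator for $L^1$ integrands — and these are supplied verbatim by Theorem~\ref{thm:sisr}. So the only genuinely new work is the base case $t=1$ and the bookkeeping that identifies the variance functionals $\sigma^2_{\theta,t-1}$ and $\tilde\sigma^2_{\theta,t}$ appearing in the two lemmas with $\mathbb{V}^r_{\theta,1:t-1}$ and $\mathbb{V}_{\theta,1:t}$, respectively.

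For the base case, I would observe that at time $1$ the array $\{\tilde X_1^{N,j}\}_{1\le j\le N_X}$ consists of i.i.d.\ draws from $Q_{\theta,y_1}$, each carrying the deterministic weight $\tilde W_{1,\theta}^{N,j}=\frac{dT_{\theta,y_1}}{dQ_{\theta,y_1}}(\tilde X_1^{N,j})$. Writing
\begin{equation*}
\hat\phi_{N_X,\theta,y_1}(f)-\phi_{\theta,y_1}(f)=\frac{N_X^{-1}\sum_{j=1}^{N_X}\tilde W_{1,\theta}^{N,j}\bigl(f(\tilde X_1^{N,j})-\phi_{\theta,y_1}(f)\bigr)}{N_X^{-1}\sum_{j=1}^{N_X}\tilde W_{1,\theta}^{N,j}},
\end{equation*}
the denominator converges in probability to $\mu_\theta(g_{\theta,y_1})=T_{\theta,y_1}(1)\in(0,\infty)$ by Assumptions~\ref{assmp1} and \ref{assmp3}, while $N_X^{1/2}$ times the numerator converges in distribution to a centered normal with variance $Q_{\theta,y_1}\bigl[\bigl(\tfrac{dT_{\theta,y_1}}{dQ_{\theta,y_1}}\bigr)^2(f-\phi_{\theta,y_1}(f))^2\bigr]$, which is finite because the weight is bounded above (Assumption~\ref{assmp5}), $\|g_{\theta,y_1}\|_\infty<\infty$ (Assumption~\ref{assmp3}), and $f\in L^2(\mathsf{X},\phi_{\theta,y_1})$. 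The ordinary i.i.d.\ central limit theorem and Slutsky's theorem then give asymptotic normality of $\hat\phi_{N_X,\theta,y_1}(f)$ with variance $\mathbb{V}_{\theta,1}(f)$; one application of Lemma~\ref{lem:resampclt}, whose consistency hypothesis is Theorem~\ref{thm:sisr}, gives the resampled central limit theorem at $t=1$ with variance $\mathbb{V}^r_{\theta,1}(f)=\mathbb{V}_{\theta,1}(f)+\phi_{\theta,y_1}(f^2)-[\phi_{\theta,y_1}(f)]^2$.

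For the inductive step, assume the resampled central limit theorem holds at time $t-1$ for every $f'\in L^2(\mathsf{X},\phi_{\theta,y_{1:t-1}})$ with asymptotic variance $\mathbb{V}^r_{\theta,1:t-1}(f')$. Before invoking Lemma~\ref{lem:mutclt} I would verify that the pushed‑forward integrand $x_{t-1}\mapsto T_{\theta,y_t}(x_{t-1},f)$ lies in $L^2(\mathsf{X},\phi_{\theta,y_{1:t-1}})$: since $T_{\theta,y_t}(x_{t-1},\cdot)$ is a positive measure of total mass at most $\|g_{\theta,y_t}\|_\infty<\infty$ (Assumption~\ref{assmp3}), Cauchy--Schwarz gives $T_{\theta,y_t}(x_{t-1},f)^2\le\|g_{\theta,y_t}\|_\infty\,T_{\theta,y_t}(x_{t-1},f^2)$, and $\phi_{\theta,y_{1:t-1}}\{T_{\theta,y_t}(x_{t-1},f^2)\}=\phi_{\theta,y_{1:t-1}}(T_{\theta,y_t}(x_{t-1},1))\,\phi_{\theta,y_{1:t}}(f^2)<\infty$ because $f\in L^2(\mathsf{X},\phi_{\theta,y_{1:t}})$; a similar estimate using Assumption~\ref{assmp5} shows $\eta^2_{\theta,t-1}(f)$ in \eqref{eta} is finite. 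Lemma~\ref{lem:mutclt} then delivers the mutation‑stage central limit theorem at time $t$ with variance \eqref{eqn:asymp_var1}, which upon identifying $\sigma^2_{\theta,t-1}(\cdot)$ with the previous step's resampled variance $\mathbb{V}^r_{\theta,1:t-1}(\cdot)$ is exactly $\mathbb{V}_{\theta,1:t}(f)$ of \eqref{eqn:mut_asymp_var}. A final application of Lemma~\ref{lem:resampclt} (consistency again from Theorem~\ref{thm:sisr}, with $\tilde\sigma^2_{\theta,t}(\cdot)=\mathbb{V}_{\theta,1:t}(\cdot)$) gives the resampled central limit theorem at time $t$ with variance $\mathbb{V}^r_{\theta,1:t}(f)$ of \eqref{eqn:res_asymp_var}, closing the induction.

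The main obstacle I anticipate is the base case: one must pin down precisely which self‑normalized importance‑sampling central limit theorem is being invoked and confirm the exact constant $\mathbb{V}_{\theta,1}(f)$, since any slip there propagates through the whole recursion. A secondary but recurring point of care is verifying, at each step, that the transformed function $T_{\theta,y_t}(x_{t-1},\cdot)f$ (and the $L^1$ integrands needed for the consistency hypotheses) genuinely belong to the function classes demanded by Lemmas~\ref{lem:mutclt} and \ref{lem:resampclt} — this is where Assumptions~\ref{assmp3} and \ref{assmp5} carry the argument.
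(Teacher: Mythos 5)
Your proposal is correct and follows essentially the same route as the paper: the base case at $t=1$ via the ordinary i.i.d.\ CLT plus Slutsky's theorem for $\hat{\phi}_{N_X,\theta,y_1}(f)$, then induction alternating Lemma~\ref{lem:mutclt} and Lemma~\ref{lem:resampclt}, with the consistency hypotheses supplied by Theorem~\ref{thm:sisr}. Your explicit identification of $\sigma^2_{\theta,t-1}$ with the resampled variance $\mathbb{V}^r_{\theta,1:t-1}$ and your verification that $T_{\theta,y_t}(\cdot,f)\in L^2(\mathsf{X},\phi_{\theta,y_{1:t-1}})$ merely make explicit bookkeeping that the paper leaves implicit or delegates to the proofs of the lemmas.
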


These asymptotic variance recursions show how and when accuracy is lost--expression~\ref{eqn:mut_asymp_var} functions as a law of total variance, and expression~\ref{eqn:res_asymp_var} is additive as the variance added by resampling is the same as the variance of the filtering distribution at that time point.


\subsection{A Central Limit Theorem for the Particle Swarm Filter}

\begin{theorem}\label{thm:pswarmclt1}
Under assumptions \ref{assmp1}-\ref{assmp9}, for any $t \ge 1$, and any $f \in L^2(\mathsf{X}, \pi \phi_{y_{1:t}})$, if $\mathbb{V}_{\theta,1:t}(f) < \infty$ for all $\theta \in \Theta$, then
\begin{align*}
&N_X^{1/2}N_{\theta}^{1/2}\left\{ \widehat{\pi\phi}_{y_{1:t}}(f) - \pi\phi_{y_{1:t}}(f) \right\} \overset{\text{D}}{\to} \\
&\mathcal{N}\left(0, 
\rho \left[\left(\frac{d\pi}{d\rho}(\theta)\phi_{\theta, y_{1:t}}(f)- [\pi \phi]_{y_{1:t}}(f) \right)^2 \right]
+
\pi\left[ 
\frac{d\pi}{d\rho}(\theta) \mathbb{V}_{\theta,1:t}(f)
\right]
\right)
\end{align*}
as $N_X,N_\theta \to \infty$.
\end{theorem}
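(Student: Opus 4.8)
The plan is to decompose the rescaled error into a \emph{between-filter} term $B_N$, which carries only the Monte Carlo error from sampling $\theta^1,\dots,\theta^{N_\theta}\sim\rho$, and a \emph{within-filter} term $A_N$, which carries the pooled particle-filtering error, and then to prove a central limit theorem for each piece and argue that the two are asymptotically independent. Adding and subtracting $N_\theta^{-1}\sum_{i=1}^{N_\theta}\frac{d\pi}{d\rho}(\theta^i)\phi_{\theta^i,y_{1:t}}(f)$ gives
\begin{equation*}
N_X^{1/2}N_\theta^{1/2}\left\{\widehat{\pi\phi}_{y_{1:t}}(f)-\pi\phi_{y_{1:t}}(f)\right\}=A_N+B_N,
\end{equation*}
with $A_N=N_X^{1/2}N_\theta^{-1/2}\sum_{i=1}^{N_\theta}\frac{d\pi}{d\rho}(\theta^i)\left\{\hat{\phi}_{N_X,\theta^i,y_{1:t}}(f)-\phi_{\theta^i,y_{1:t}}(f)\right\}$ and $B_N=N_\theta^{-1/2}\sum_{i=1}^{N_\theta}\left\{\frac{d\pi}{d\rho}(\theta^i)\phi_{\theta^i,y_{1:t}}(f)-[\pi\phi]_{y_{1:t}}(f)\right\}$. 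The claimed limiting variance is precisely the sum of the asymptotic variances of these two terms.

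The term $B_N$ depends only on the parameter draws and is a normalized sum of i.i.d.\ terms (the $\theta^i$ are i.i.d.\ $\rho$), each centered because $\rho[\frac{d\pi}{d\rho}(\theta)\phi_{\theta,y_{1:t}}(f)]=[\pi\phi]_{y_{1:t}}(f)$ by Assumption~\ref{assmp6}. A finite second moment holds: by Jensen's inequality $\phi_{\theta,y_{1:t}}(f)^2\le\phi_{\theta,y_{1:t}}(f^2)$ and Assumption~\ref{assmp7}, $\rho[(\frac{d\pi}{d\rho}(\theta)\phi_{\theta,y_{1:t}}(f))^2]\le(\sup_{\theta\in\Theta}\frac{d\pi}{d\rho}(\theta))\,\pi\phi_{y_{1:t}}(f^2)<\infty$ since $f\in L^2(\mathsf{X},\pi\phi_{y_{1:t}})$. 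The Lindeberg--L\'evy central limit theorem then yields $B_N\overset{\text{D}}{\to}\mathcal{N}(0,\rho[(\frac{d\pi}{d\rho}(\theta)\phi_{\theta,y_{1:t}}(f)-[\pi\phi]_{y_{1:t}}(f))^2])$, which is the first variance term.

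For $A_N$ the plan is to condition on $\mathcal{F}_\theta^N$, the $\sigma$-field generated by $\theta^1,\dots,\theta^{N_\theta}$. Given $\mathcal{F}_\theta^N$, the $N_\theta$ particle filters in Algorithm~\ref{alg:swarm} evolve independently, so $A_N$ is, conditionally, a sum of independent (non-identically distributed) summands. Theorem~\ref{thm:sisrclt} supplies $N_X^{1/2}\{\hat{\phi}_{N_X,\theta^i,y_{1:t}}(f)-\phi_{\theta^i,y_{1:t}}(f)\}\overset{\text{D}}{\to}\mathcal{N}(0,\mathbb{V}_{\theta^i,1:t}(f))$ as $N_X\to\infty$ for each $i$ (the hypothesis $\mathbb{V}_{\theta,1:t}(f)<\infty$ for all $\theta$, with the recursions in Theorem~\ref{thm:sisrclt}, also forces $f\in L^2(\mathsf{X},\phi_{\theta,y_{1:t}})$ for every $\theta$). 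A conditional Lindeberg--Feller argument — equivalently, a characteristic-function computation in which $\mathbb{E}[e^{\mathrm{i}sA_N}\mid\mathcal{F}_\theta^N]$ factors over $i$ by conditional independence — then shows that $A_N$, conditionally on $\mathcal{F}_\theta^N$, is asymptotically mean-zero Gaussian with conditional variance $N_\theta^{-1}\sum_{i=1}^{N_\theta}(\frac{d\pi}{d\rho}(\theta^i))^2\mathbb{V}_{\theta^i,1:t}(f)$, which by the strong law of large numbers converges almost surely to $\rho[(\frac{d\pi}{d\rho}(\theta))^2\mathbb{V}_{\theta,1:t}(f)]=\pi[\frac{d\pi}{d\rho}(\theta)\mathbb{V}_{\theta,1:t}(f)]$, finite by Assumptions~\ref{assmp7} and~\ref{assmp8} together with the hypothesis. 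Because this limiting conditional variance is a constant, $A_N$ is asymptotically independent of $\mathcal{F}_\theta^N$, hence of the $\mathcal{F}_\theta^N$-measurable $B_N$; combining the two convergences then gives $A_N+B_N\overset{\text{D}}{\to}\mathcal{N}(0,V)$ with $V$ the sum of the two variances, as asserted.

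The hard part will be reconciling the two limits $N_X\to\infty$ and $N_\theta\to\infty$ inside the conditional CLT for $A_N$: the characteristic-function product evaluates each filter's conditional characteristic function at arguments shrinking like $N_\theta^{-1/2}$ while $N_\theta$ such factors are multiplied, so the pointwise-in-$\theta$ convergence furnished by Theorem~\ref{thm:sisrclt} must be strengthened to one uniform over the (random) parameter values present in the average. This is where compactness of $\Theta$ (Assumption~\ref{assmp8}), stochastic equicontinuity (Assumption~\ref{assmp9}), and the exponential deviation inequalities mentioned after Assumption~\ref{assmp9} enter, together with a uniform-integrability bound used to pass from convergence in distribution of $N_X^{1/2}\{\hat{\phi}_{N_X,\theta,y_{1:t}}(f)-\phi_{\theta,y_{1:t}}(f)\}$ to convergence of its conditional characteristic function near the origin. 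A secondary point to invoke explicitly is that $A_N$ stays asymptotically centered only because the finite-sample bias of SISR is $o(N_X^{-1/2})$ (in fact $O(N_X^{-1})$ under Assumptions~\ref{assmp1}--\ref{assmp5}).
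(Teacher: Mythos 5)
Your proposal follows essentially the same route as the paper's proof: the identical decomposition into a within-filter and a between-filter term (the paper's $E_{N,\theta}+F_{N,\theta}$ is your $A_N+B_N$), a classical CLT for the between-filter term, and a conditional characteristic-function factorization over the $N_{\theta}$ conditionally independent filters given $\mathcal{F}_{\theta}^N$, combined with Theorem~\ref{thm:sisrclt} for each factor, a law of large numbers for the conditional variance $N_{\theta}^{-1}\sum_i\bigl(\tfrac{d\pi}{d\rho}(\theta^i)\bigr)^2\mathbb{V}_{\theta^i,1:t}(f)$, and dominated convergence to exchange limit and expectation. The ``hard part'' you flag at the end is sidestepped in the paper by taking the iterated limit $\lim_{N_{\theta}\to\infty}\lim_{N_X\to\infty}$, so that for each fixed $N_{\theta}$ only finitely many characteristic-function factors need to converge and no uniformity over the sampled parameter values is required at that stage.
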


\begin{proof}[Proof of Theorem~\ref{thm:pswarmclt1}]

Just as in the proof to Theorem~\ref{thm:pswarmconsistency}, write $\frac{1}{N_{\theta}} \sum_{i=1}^{N_{\theta}} \frac{d\pi}{d\rho}(\theta^i) 
\hat{\phi}_{N_X, \theta^i,y_{1:t}}(f) - [\pi \phi]_{y_{1:t}}(f)$ as $E_{N,\theta} + F_{N,\theta}$, where
\begin{equation}
E_{N,\theta} 
:= 
\frac{1}{N_{\theta}} \sum_{i=1}^{N_{\theta}} \frac{d\pi}{d\rho}(\theta^i) 
\hat{\phi}_{N_X, \theta^i,y_{1:t}}(f) 
- \frac{1}{N_{\theta} } \sum_{i=1}^{N_{\theta}} \frac{d\pi}{d\rho}(\theta^i) \phi_{\theta^i, y_{1:t}}(f)
\end{equation}
and
\begin{equation}
F_{N,\theta} := \frac{1}{N_{\theta} } \sum_{i=1}^{N_{\theta}} \frac{d\pi}{d\rho}(\theta^i)\phi_{\theta^i, y_{1:t}}(f)- [\pi \phi]_{y_{1:t}}(f).
\end{equation}

Using this fact, iterating the expectations, and applying the dominated convergence theorem, we can look at the joint characteristic function for these two pieces:
\begin{align*}
&\lim_{N_{\theta} \to \infty} \lim_{N_X \to \infty} 
\mathbb{E} \left[
\exp\left( is N_{\theta}^{1/2}N_X^{1/2} E_{N,\theta} \right)
\exp\left( it N_{\theta}^{1/2}N_X^{1/2} F_{N,\theta} \right) \right] \\
&=
\lim_{N_{\theta} \to \infty} \lim_{N_X \to \infty} 
\mathbb{E} \left[
\exp\left( it N_{\theta}^{1/2}N_X^{1/2} F_{N,\theta} \right) 
\mathbb{E} \left[
\exp\left( is N_{\theta}^{1/2}N_X^{1/2} E_{N,\theta} \right)
\mid \mathcal{F}_{\theta} \right] 
\right] \\ 
&= 
\exp\left( - \frac{t^2}{2} \rho \left[\left(\frac{d\pi}{d\rho}(\theta)\phi_{\theta, y_{1:t}}(f)- [\pi \phi]_{y_{1:t}}(f) \right)^2 \right]  \right)
\mathbb{E} \left[
\lim_{N_{\theta} \to \infty} \lim_{N_X \to \infty} 
\mathbb{E} \left[
\exp\left( is N_{\theta}^{1/2}N_X^{1/2} E_{N,\theta} \right)
\mid \mathcal{F}_{\theta} \right] 
\right] ,
\end{align*}
where the last line follows from a traditional central limit theorem applied to $\theta \mapsto \pi \phi_{y_{1:t}}(f)$. The second factor is equal to
\begin{align*}
&\lim_{N_{\theta} \to \infty} \lim_{N_X \to \infty} 
 \mathbb{E}\left[  \mathbb{E}\left[ \exp\left\{ is
N_X^{1/2}N_{\theta}^{1/2}\left\{  
 N_{\theta}^{-1}\sum_{i=1}^{N_{\theta}} \left\{ 
 \frac{d\pi}{d\rho}(\theta^i) \hat{\phi}_{N_X, \theta^i, y_{1:t}}(f)
- \frac{d\pi}{d\rho}(\theta^i) \phi_{\theta^i, y_{1:t}}(f)
\right\} \right\}
 \right\}
 \bigg\rvert \mathcal{F}_\theta^N \right] \right] \\
&=\lim_{N_{\theta} \to \infty} \lim_{N_X \to \infty} 
 \mathbb{E}\left[  \mathbb{E}\left[ \exp\left\{ is
\left\{  
 \sum_{i=1}^{N_{\theta}} N_{\theta}^{-1/2}\frac{d\pi}{d\rho}(\theta^i) N_X^{1/2} \left\{ 
  \hat{\phi}_{N_X, \theta^i, y_{1:t}}(f)
- \phi_{\theta^i, y_{1:t}}(f)
\right\} \right\}
 \right\}
 \bigg\rvert \mathcal{F}_\theta^N \right] \right] \\
&=\lim_{N_{\theta} \to \infty}  
 \mathbb{E}\left[   \prod_{i=1}^{N_{\theta}} \lim_{N_X \to \infty}
 \mathbb{E}\left[ \exp\left\{ \left(is N_{\theta}^{-1/2}\frac{d\pi}{d\rho}(\theta^i) \right)
\left\{  
  N_X^{1/2} \left\{ 
  \hat{\phi}_{N_X, \theta^i, y_{1:t}}(f)
- \phi_{\theta^i, y_{1:t}}(f)
\right\} \right\}
 \right\}
 \bigg\rvert \mathcal{F}_\theta^N \right] \right] \\
&=\lim_{N_{\theta} \to \infty}  
 \mathbb{E}\left[   \prod_{i=1}^{N_{\theta}}  
\exp\left( - \frac{s^2 }{2N_{\theta}}\left(\frac{d\pi}{d\rho}(\theta^i) \right)^2 \mathbb{V}_{\theta^i,1:t}(f) \right) 
 \right] \\
&=\lim_{N_{\theta} \to \infty}  
 \mathbb{E}\left[     
\exp\left( - \frac{s^2 }{2N_{\theta}}\sum_{i=1}^{N_{\theta}}	\left(\frac{d\pi}{d\rho}(\theta^i) \right)^2 \mathbb{V}_{\theta^i,1:t}(f) \right) 
 \right] \\
&=
\exp\left( - \frac{s^2 }{2} \pi\left[ 
\frac{d\pi}{d\rho}(\theta) \mathbb{V}_{\theta,1:t}(f)
\right]\right)   .
 \end{align*}

The last line follows because 
\begin{equation*}
N_{\theta}^{-1} \sum_{i=1}^{N_{\theta}}	\left(\frac{d\pi}{d\rho}(\theta^i) \right)^2 \mathbb{V}_{\theta^i,1:t}(f) 
\overset{\text{p}}{\to}
\pi\left[ 
\frac{d\pi}{d\rho}(\theta) \mathbb{V}_{\theta,1:t}(f),
\right]
\end{equation*}
and by the continuous mapping theorem, the exponentiated version of this converges in probability as well. By assumption~\ref{assmp7} and because each particle filter's asymptotic variance is bounded above, the order of the limit and the expectation can be changed due to the dominated convergence theorem. 

The above is a derivation of the asymptotic joint distribution of the random vector $\left(N_{\theta}^{1/2}N_X^{1/2}E_{N,\theta}, N_{\theta}^{1/2}N_X^{1/2}F_{N,\theta}  \right)^T$. The final result holds after an application of the delta method.

\end{proof}


\section{Numerical Experiments}\label{sec:sim_study}

Univariate time series data was simulated from the stochastic volatility model of \cite{taylor}. Assume $T \in \mathbb{N}^+$ and $t \in \mathbb{N} \cap [1, T]$, and let $w_{1:T}$ and $v_{1:T}$ be iid mean zero Gaussian random variables with variance $1$. The model is defined as
\begin{eqnarray*}
y_t = \beta \exp\left( \frac{x_t}{2} \right) v_t \\
x_t = \phi x_{t-1} + \sigma w_t, \hspace{2mm} t \ge 2 \\
x_1 = \frac{\sigma_x}{(1-\phi^2)^{1/2}} w_1.
\end{eqnarray*}
where $\phi = .91$, $\beta= .5$, and $\sigma = 1.0$. 

It is frequently assumed that the parameter vector $\theta = (\phi, \beta, \sigma)^\intercal \in (-1,1) \times (-\infty,\infty) \times [0,\infty)$; however, this parameter space is restricted further by the selection of an informative prior $\pi$. This provides a compliance with assumption~\ref{assmp9}. Independent uniform priors are chosen for these three parameters, each with with supports $[.5,.99]$, $[0.0, 1.0]$, and $[.5,2]$, respectively.

Using equation~\ref{eqn:post_pred} and derivations of $f_1(x_t, \theta) = \mathbb{E}[y_{t+1} \mid x_t, \theta] = 0$ and $f_2(x_t, \theta) = \mathbb{E}[y_{t+1}^2 \mid x_t, \theta] = \beta^2\exp\left(\phi x_t + \sigma^2/2 \right)$, we can plot the estimates of $\mathbb{E}[y_{t+1} \mid y_{1:t}]$ and (the approximate) $\mathbb{E}[y_{t+1}^2 \mid y_{1:t}]$ in time. Figure~\ref{fig:intervals} shows the observed time series $y_{1:1000}$, as well as plus or minus twice the approximate estimates of the posterior predictive forecast standard deviation. This plot was made with $1000$ parameter samples from $\rho$ set equal to $\pi$, and for each of those, $1000$ state particles in each particle filter.
\begin{figure}
\begin{center}
\includegraphics[scale=.75,bb=0 0 504 504]{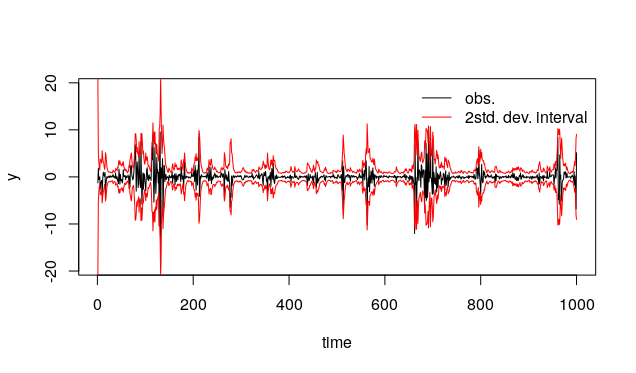}
\end{center}
\caption{Out-of-sample approximate posterior predictive forecasts.}\label{fig:intervals}
\end{figure}

To help visualize the uncertainty of the width of these intervals, the above process is repeated many times. $100$ sequences of the estimates of $\mathbb{E}[y_{t+1}^2 \mid y_{1:t}]$ are generated. Each sequence is calculated using $100$ parameter samples and $100$ particles for each particle filter. Figure~\ref{fig:uncertainty} shows the sample standard deviation of each time's\footnote{I remove the first time point's measurement of $447731.1$ because it is quite large (and expectedly so). Failing to do this produces a figure that is quite unreadable.} approximation to $\mathbb{E}[y_{t+1}^2 \mid y_{1:t}]$. This plot suggests that the uncertainty of these particular moment estimates from algorithm~\ref{alg:swarm} are bounded uniformly in time.
\begin{figure}
\begin{center}
\includegraphics[scale=.75,bb=0 0 504 504]{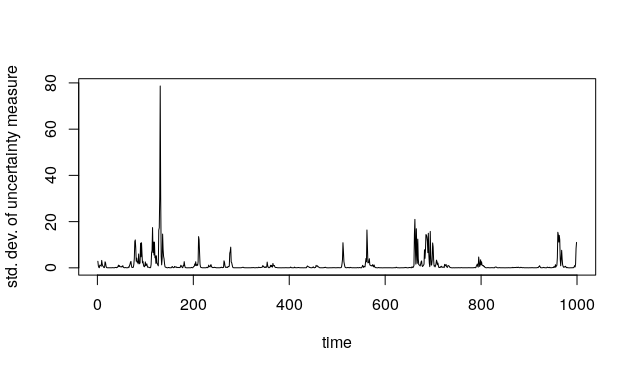}
\end{center}
\caption{Out-of-sample posterior predictive forecasts.}\label{fig:uncertainty}
\end{figure}

Assumption~\ref{assmp9} can be verified with any exponential deviation inequality; however, these typically require the assumption that $f_2$ is bounded. Even though we are assuming the parameter space is compact with assumption~\ref{assmp8}, it will not be the case in general because $\mathsf{X}$ is unbounded. For practical purposes here, though, we may pick a very large $M$ and swap $\bar{f}_2(x_t) = f_2(x_t) 1(|x_t| \le M)$ for $f_2$. As long as a large enough $M$ is chosen, it is extremely unlikely that there will be any difference in using these two functions in practice because $x_{1:1000}$ is a stationary process.

 
\section{Conclusion}\label{sec:conclusion}

I have presented an analysis of algorithm~\ref{alg:swarm}, the particle swarm filter. A central limit theorem has been deomonstrated, as well as a law of large numbers, justify the use of it in the regime where both $N_X$ and $N_{\theta}$ tend to infinity. Numerical experiments have demonstrated an application of this algorithm to estimating, in real-time, moments of the sequence of posterior predictive distributions while completely avoiding the need of any offline parameter estimation procedure, such as a MCMC algorithm.


\appendix
\section{Appendix}\label{Appendix}

\subsection{Consistency of Individual Particle Filters }

\begin{proposition}\label{prop957}
Let $\{V_{N,j}\}_{1 \le j \le M_N}$ be a triangular array of random variables and let $\{\mathcal{F}^N\}$ be a sequence of sub-$\sigma$-fields of $\mathcal{F}$. Assume that the following conditions hold true.

\begin{enumerate}
\item The triangular array is conditionally independent given $\{\mathcal{F}^N\}$ and for any $N$ and $j = 1,\ldots M_N$, $\mathbb{E}[|V_{N,j}| \mid \mathcal{F}^N] < \infty$.
\item The sequence $\{ \sum_{j=1}^{M_N} \mathbb{E}[|V_{N,j}| \mid \mathcal{F}^N]\}_{N \ge 0}$ is bounded in probability.
\item For any positive $\epsilon$
$$
\sum_{j=1}^{M_N} \mathbb{E}[ |V_{N,j}| 1\left(|V_{N,j}| \ge \epsilon \right)  \mid \mathcal{F}^N] \overset{\text{p}}{\to} 0.
$$
\end{enumerate}

Then
$$
\sum_{i=1}^{M_N} \{ V_{N,j} - \mathbb{E}[ V_{N,j} \mid \mathcal{F}^N] \overset{\text{p}}{\to} 0.
$$
\end{proposition}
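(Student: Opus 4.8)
The plan is to recognize Proposition~\ref{prop957} as a conditional weak law of large numbers and prove it by the classical truncation argument, carried out conditionally on $\mathcal{F}^N$. Fix $\epsilon > 0$ and write $V_{N,j} = V_{N,j}^{<} + V_{N,j}^{\ge}$ with $V_{N,j}^{<} := V_{N,j}\,1(|V_{N,j}| < \epsilon)$ and $V_{N,j}^{\ge} := V_{N,j}\,1(|V_{N,j}| \ge \epsilon)$. The centered sum $\Sigma_N := \sum_{j=1}^{M_N}\{V_{N,j} - \mathbb{E}[V_{N,j}\mid\mathcal{F}^N]\}$ then splits as $\Sigma_N = S_N + T_N$, where $S_N$ collects the truncated pieces $V_{N,j}^{<} - \mathbb{E}[V_{N,j}^{<}\mid\mathcal{F}^N]$ and $T_N$ the tail pieces $V_{N,j}^{\ge} - \mathbb{E}[V_{N,j}^{\ge}\mid\mathcal{F}^N]$; all conditional expectations are finite a.s.\ by condition~1. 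I would show $T_N \overset{\text{p}}{\to} 0$ for every fixed $\epsilon$, show $S_N$ is uniformly small in probability once $\epsilon$ is small, and then combine.

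For the tail term, $|T_N| \le \sum_j |V_{N,j}^{\ge}| + B_N$ where $B_N := \sum_j \mathbb{E}[|V_{N,j}|\,1(|V_{N,j}|\ge\epsilon)\mid\mathcal{F}^N] \overset{\text{p}}{\to} 0$ by condition~3. Since $\mathbb{E}[\sum_j |V_{N,j}^{\ge}| \mid \mathcal{F}^N] = B_N$, the conditional Markov inequality gives $\mathbb{P}(\sum_j |V_{N,j}^{\ge}| > \delta \mid \mathcal{F}^N) \le 1 \wedge (B_N/\delta)$; the right side is bounded by $1$ and tends to $0$ in probability, so by bounded convergence $\mathbb{P}(\sum_j |V_{N,j}^{\ge}| > \delta) \to 0$, and hence $T_N \overset{\text{p}}{\to} 0$. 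For the truncated term, conditional independence (condition~1) gives $\mathbb{E}[S_N^2 \mid \mathcal{F}^N] = \sum_j \mathrm{Var}(V_{N,j}^{<}\mid\mathcal{F}^N) \le \sum_j \mathbb{E}[(V_{N,j}^{<})^2\mid\mathcal{F}^N] \le \epsilon \sum_j \mathbb{E}[|V_{N,j}|\mid\mathcal{F}^N] =: \epsilon\,C_N$, and $C_N$ is bounded in probability by condition~2. Conditional Chebyshev then yields $\mathbb{P}(|S_N| > a \mid \mathcal{F}^N) \le \epsilon C_N/a^2$ for any $a > 0$, so $\mathbb{P}(|S_N| > a) \le \mathbb{E}[1 \wedge \epsilon C_N/a^2]$.

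To finish, fix $\delta,\eta > 0$. By tightness of $\{C_N\}$ choose $K$ with $\sup_N \mathbb{P}(C_N > K) \le \eta$, then choose the constant $\epsilon$ small enough that $4\epsilon K/\delta^2 \le \eta$; applying the last display with $a = \delta/2$ gives $\limsup_N \mathbb{P}(|S_N| > \delta/2) \le \eta + 4\epsilon K/\delta^2 \le 2\eta$. For \emph{this} fixed $\epsilon$ the tail estimate gives $\mathbb{P}(|T_N| > \delta/2) \to 0$, so $\limsup_N \mathbb{P}(|\Sigma_N| > \delta) \le \limsup_N \mathbb{P}(|S_N| > \delta/2) + \limsup_N \mathbb{P}(|T_N| > \delta/2) \le 2\eta$. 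Letting $\eta \downarrow 0$ yields $\Sigma_N \overset{\text{p}}{\to} 0$.

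I expect the only real obstacle to be the bookkeeping of passing from \emph{conditional} inequalities to \emph{unconditional} convergence when the controlling quantities $B_N$ and $C_N$ are themselves merely $o_P(1)$ and $O_P(1)$ — in particular ensuring the truncation level $\epsilon$ is a constant selected \emph{after} the tightness constant $K$ but \emph{before} the limit in $N$, so that the tail estimate is genuinely applied at a fixed $\epsilon$. The moment bound, the two conditional Markov/Chebyshev steps, and the bounded-convergence passages are all routine.
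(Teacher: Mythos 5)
Your proof is correct: the truncation at level $\epsilon$, the conditional Markov bound on the tail term via condition~3, the conditional Chebyshev bound $\mathbb{E}[S_N^2\mid\mathcal{F}^N]\le\epsilon\,C_N$ via conditional independence, and the final quantifier ordering (choose $K$ from tightness, then $\epsilon$, then let $N\to\infty$) are all handled properly. Note that the paper does not prove Proposition~\ref{prop957} at all --- it is transcribed verbatim from \cite[Proposition 9.5.7]{iihmm} and used as a black box --- and your argument is essentially the standard conditional weak-law truncation proof given in that reference, so there is nothing to reconcile.
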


\begin{proof}[Proof of Lemma~\ref{lem:mutation}]
For any $\theta \in \Theta$, the numerator of the right hand side of 

\begin{equation}\label{eqn:phihat}
\hat{\phi}_{N_X, \theta,y_{1:t}}(f) =
\frac{N_X^{-1} \sum_{j=1}^{N_X} \tilde{W}_{t,\theta}^{N,j} f(\tilde{X}_t^{N,j}) }{ N_X^{-1} \sum_{j'=1}^{N_X}\tilde{W}_{t,\theta}^{N,j'} }
\end{equation}

can be rewritten as

\begin{align}
&N_X^{-1} \sum_{j=1}^{N_X} \left\{ \tilde{W}_{t,\theta}^{N,j} f(\tilde{X}_t^{N,j})  -  \mathbb{E}\left[ \tilde{W}_{t,\theta}^{N,j} f(\tilde{X}_k^{N,j}) \mid \mathcal{F}_{\theta, t-1}^N \right] \right\}  \label{bigpicmutation} \\
&\hspace{1mm} + 
N_X^{-1} \sum_{j=1}^{N_X}  \mathbb{E}\left[ \tilde{W}_{t,\theta}^{N,j} f(\tilde{X}_t^{N,j}) \mid \mathcal{F}_{\theta, t-1}^N \right] \label{bigpicmutation2}.
\end{align}

We will start off by verifying the assumptions of proposition~\ref{prop957} in order to prove that the first of these two summands, the expression in \ref{bigpicmutation}, converges in probability to $0$.

First, the triangular array $\{ N_X^{-1} \tilde{W}_{t,\theta}^{N,j} f(\tilde{X}_t^{N,j}) \}_{1 \le j \le N_X}$ is conditionally independent given $\mathcal{F}_{\theta,t-1}^N$. This is true by the description of the mutation step of algorithm~\ref{alg:sisr}.

Second, for any $N \ge 1$ and $1 \le j \le N_X$, $N_X^{-1} \mathbb{E}\left[\tilde{W}_{t,\theta}^{N,j} f(\tilde{X}_t^{N,j}) \mid \mathcal{F}_{\theta,t-1}^N \right] < \infty$; and third: $\{ N_X^{-1} \sum_{j=1}^{N_X} \mathbb{E}\left[ \tilde{W}_{t,\theta}^{N,j} f(\tilde{X}_t^{N,j}) \mid \mathcal{F}_{\theta,t-1}^N \right] \}_{1 \le j \le N_X}$ is bounded in probability. These can be shown to true by some overlapping reasoning. 

Regarding these, $f \in L^1(\mathsf{X}, \phi_{\theta,y_{1:t}})$ implies

$$
0 \le \phi_{\theta,y_{1:t}}(|f|) 
= 
\frac{ \phi_{\theta, y_{1:t-1}}\left( T_{\theta, y_t}(x_{t-1}, |f|) \right) }{ \phi_{\theta, y_{1:t-1}}\left( T_{\theta, y_t}(x_{t-1}, 1) \right) } < \infty.
$$
The addition of assumptions \ref{assmp1}, \ref{assmp2} and \ref{assmp3} imply that $\left|T_{\theta,y_t}(x_{t-1}, 1)\right|= T_{\theta,y_t}(x_{t-1}, 1)$ in $L^1(\mathsf{X}, \phi_{\theta,y_{1:t-1}})$. This in turn implies  $T_{\theta,y_t}(x_{t-1}, |f|) = \left|T_{\theta,y_t}(x_{t-1}, |f|)\right|$ is in $L^1(\mathsf{X}, \phi_{\theta,y_{1:t-1}})$, which in turn implies that $T_{\theta,y_t}(x_{t-1}, f)$ is in there as well. Finally, the target expectation $\mathbb{E}\left[\tilde{W}_{t,\theta}^{N,j} f(\tilde{X}_t^{N,j}) \mid \mathcal{F}_{\theta,t-1}^N \right]$ is finite as well, because it is equal to $T_{\theta, y_t}(X_{t-1}, f)$ by assumption~\ref{assmp4}. 

Further, the sequence of averages in the third condition is bounded in probability because it converges in probability, due to our assumption of the consistency of $\check{\phi}_{N_X, \theta,y_{1:t-1}}(f')$, with $f'$ being set equal to $T(x_{t-1}, f)$.

Finally, for any $\epsilon > 0$, we have
\begin{equation}
N_X^{-1} \sum_{j=1}^{N_X} \mathbb{E}\left[ \tilde{W}_{t,\theta}^{N,j} f(\tilde{X}_t^{N,j}) 1 \left( N_X^{-1}\tilde{W}_{t,\theta}^{N,j} f(\tilde{X}_t^{N,j}) \ge \epsilon \right) \mid \mathcal{F}_{\theta,t-1}^N \right]
\overset{\text{p}}{\to} 0
\end{equation}
as $N_X,N_\theta \to \infty$. This is true by a dominated convergence argument that will also be used in the proof of Lemma~\ref{lem:resampling}.

We have verified all the assumptions of proposition~\ref{prop957}, so the sequence of averages in expression~\ref{bigpicmutation} converges in probability to $0$.

Assumption~\ref{assmp4} allows us to write the expression in expression~\ref{bigpicmutation2} as $ 
N_X^{-1} \sum_{j=1}^{N_X}  T(x_{t-1}^{N,j} ,f)$. This converges in probability to $\phi_{\theta,y_{1:t-1}}(T(x_{t-1}, f))$, as I have already shown $T(x_{t-1}, f) \in L^1(\mathsf{X}, \phi_{\theta,y_{1:t-1}}))$. 

Finally, the denominator of the right hand side of equation~\ref{eqn:phihat} converges to $\phi_{\theta,y_{1:t-1}} \left( T_{\theta,y_t}(X_{t-1},1) \right)$ because, as I have shown, $T_{\theta, y_t}(X_{t-1}, 1) \in L^1(\mathsf{X}, \phi_{\theta,y_{1:t-1}}))$. 

\end{proof}


\begin{proof}[Proof of Lemma~\ref{lem:resampling}]
Pick a $\theta \in \Theta$, $f \in L^1(\mathsf{X}, \phi_{\theta,y_{1:t}}))$, and write $\check{\phi}_{N_X, \theta,y_1}(f)$ as 

\begin{equation}
\left\{ \check{\phi}_{N_X, \theta,y_1}(f) - \hat{\phi}_{N_X, \theta,y_1}(f)\right\} + \hat{\phi}_{N_X, \theta,y_1}(f).
\end{equation}
The second term converges to the target by assumption. The resampling error, which is the first term in curly braces, converges to $0$ after an application of proposition~\ref{prop957}, whose assumptions are now verified.

The triangular array $\{ N_X^{-1} f(X_t^{N,j}) \}_{1 \le j \le N_X}$ is conditionally independent given $\tilde{\mathcal{F}}_{\theta,t}^N$ because of the description of the resampling step in algorithm~\ref{alg:sisr}. 

For any $N \ge 1$ and $1 \le j \le N_X$, $N_X^{-1} \mathbb{E}\left[  f(X_t^{N,j}) \mid \tilde{\mathcal{F}}_{\theta,t}^N \right] < \infty$ because

\begin{equation}
\mathbb{E}\left[  f(X_t^{N,j}) \mid \tilde{\mathcal{F}}_{\theta,t}^N \right] 
= 
\sum_{j=1}^{N_X} \frac{\tilde{W}_t^{N,j} }{ \sum_{j'}\tilde{W}_t^{N,j'}} f( \tilde{X}_t^{N,j})
\le ||f||_{\infty} < \infty.
\end{equation}
The sum of the normalized weights is $1$ by assumptions \ref{assmp4} and \ref{assmp5}. 

These same assumptions, along with the assumption of $\hat{\phi}_{N_X, \theta,y_{1:t}}(f)$'s consistency, also guarantee the third point of proposition~\ref{prop957}, that $\left\{ N_X^{-1} \sum_{j=1}^{N_X} \mathbb{E}\left[ f(X_t^{N,j}) \mid \tilde{\mathcal{F}}_{\theta,t}^N \right] \right\}_{1 \le j \le N_X}$ is bounded in probability. This is true because the sequence is consistent:

\begin{align*}
N_X^{-1} \sum_{j=1}^{N_X} \mathbb{E}\left[ f(X_t^{N,j}) \mid \tilde{\mathcal{F}}_{\theta,t}^N \right] 
&= \mathbb{E}\left[ f(X_t^{N,1}) \mid \tilde{\mathcal{F}}_{\theta,t}^N \right] \\
&= \sum_{j=1}^{N_X} \frac{\tilde{W}_t^{N,j} }{ \sum_{j'}\tilde{W}_t^{N,j'}} f( \tilde{X}_t^{N,j}) \overset{\text{p}}{\to} \phi_{\theta,y_{1:t}}(f).
\end{align*}

Finally, the fourth condition of proposition~\ref{prop957} is for any $\epsilon > 0$, 
\begin{equation}
N_X^{-1} \sum_{j=1}^{N_X} \mathbb{E}\left[ |f|(X_t^{N,j}) 1 \left( N_X^{-1} |f|(X_t^{N,j}) \ge \epsilon \right) \mid \tilde{\mathcal{F}}_{\theta,t}^N \right]
\overset{\text{p}}{\to} 0
\end{equation}
as $N_X,N_\theta \to \infty$. For any $0 \le C_{\epsilon} \le N_X \epsilon$,

\begin{equation*}
\mathbb{E}\left[ |f|(X_t^{N,j}) 1 \left( N_X^{-1} |f|(X_t^{N,j}) \ge \epsilon \right) \mid \tilde{\mathcal{F}}_{\theta,t}^N \right]
\le
\mathbb{E}\left[ |f|(X_t^{N,j}) 1 \left( |f|(X_t^{N,j}) \ge C_{\epsilon} \right) \mid \tilde{\mathcal{F}}_{\theta,t}^N \right].
\end{equation*}
For a fixed $C_{\epsilon}$, the right hand side converges in probability to $\phi_{\theta,y_{1:t}}(|f|1(|f| \ge C_{\epsilon}))$ as $N_X,N_\theta \to \infty$. This gives us

\begin{align*}
&\text{plim}_{N_X \to \infty} N_X^{-1} \sum_{j=1}^{N_X} \mathbb{E}\left[ |f|(X_t^{N,j}) 1 \left( N_X^{-1} |f|(X_t^{N,j}) \ge \epsilon \right) \mid \tilde{\mathcal{F}}_{\theta,t}^N \right] \\
&= 
\text{plim}_{C_{\epsilon} \to \infty} \text{plim}_{N_X \to \infty} N_X^{-1} \sum_{j=1}^{N_X} \mathbb{E}\left[ |f|(X_t^{N,j}) 1 \left( N_X^{-1} |f|(X_t^{N,j}) \ge \epsilon \right) \mid \tilde{\mathcal{F}}_{\theta,t}^N \right] \\
&\le
\text{plim}_{C_{\epsilon} \to \infty} \text{plim}_{N_X \to \infty} N_X^{-1} \sum_{j=1}^{N_X} \mathbb{E}\left[ |f|(X_t^{N,j}) 1 \left( |f|(X_t^{N,j}) \ge C_{\epsilon} \right) \mid \tilde{\mathcal{F}}_{\theta,t}^N \right] \\
&=
\text{plim}_{C_{\epsilon} \to \infty} \phi_{\theta,y_{1:t}}(|f|1(|f| \ge C_{\epsilon})) \\
&=  \phi_{\theta,y_{1:t}}( \lim_{C_{\epsilon} \to \infty} |f|1(|f| \ge C_{\epsilon})) \\
&= 0.
\end{align*}
The penultimate line follows from the dominated convergence theorem, which can be applied because $\phi_{\theta,y_{1:t}}\left[ |f| 1(|f| \ge C_{\epsilon})\right] \le \phi_{\theta,y_{1:t}}\left[ |f| \right] < \infty$.

\end{proof}


\begin{proof}[Proof of Theorem~\ref{thm:sisr}]

$\hat{\phi}_{N_X, \theta,y_{1}}(f)$ is consistent for $f \in L^1(\mathsf{X}, \phi_{\theta,y_{1}})$ by the traditional weak law of large numbers. It is also a corollary of \cite[Theorem 9.1.8]{iihmm}. 

Consistency of $\check{\phi}_{N_X, \theta,y_{1}}(f)$ follows from Lemma~\ref{lem:resampling}. For $t \ge 2$, consistency of $\check{\phi}_{N_X, \theta,y_{1:t}}(f)$ and $\hat{\phi}_{N_X, \theta,y_{1:t}}(f)$ arises from applying lemmas \ref{lem:mutation} and \ref{lem:resampling} inductively.
\end{proof}


\subsection{Asymptotic Normality for Individual Particle Filters}

Next, I restate \cite[Proposition 9.5.12]{iihmm} and provide two lemmas that guarantee its assumptions is here.
This result is not novel, but is included for self-containment

\begin{proposition}\label{prop9512}
Let $\{V_{N,i}\}_{1 \le i \le M_N}$ be a triangular array of random variables and let $\{\mathcal{F}^N\}_{N \ge 0}$ be a sequence of sub-$\sigma$-fields of $\mathcal{F}$. Assume that the following conditions hold true.
\begin{itemize}
\item The triangular array is conditionally independent given $\{\mathcal{F}^N\}_{N \ge 0}$, and for any $N$ and $i=1,\ldots,M_N$, $\mathbb{E}[V_{N,i}^2 \mid \mathcal{F}^N] < \infty$.
\item There exists a constant $\sigma^2 > 0$ such that
$$
\sum_{i=1}^{M_N}\left\{ \mathbb{E}\left[V_{N,i}^2 \mid \mathcal{F}^N \right] - \left(\mathbb{E}\left[V_{N,i} \mid \mathcal{F}^N \right]  \right)^2  \right\} \overset{\text{p}}{\to} \sigma^2.
$$
\item For all $\epsilon > 0$
$$
\sum_{i=1}^{M_N} 
\mathbb{E}\left[V_{N,i}^2 1\left( |V_{N,i}| \ge \epsilon \right)  \mid \mathcal{F}^N \right]  
\overset{\text{p}}{\to} 
0.
$$

Then for any $u$, 
\begin{equation}
\mathbb{E}\left[ \exp \left(i u \sum_{i=1}^{M_N}\left\{ V_{N,i}  - \mathbb{E}\left[V_{N,i} \mid \mathcal{F}^N \right]    \right\} \right)  \right] \overset{\text{p}}{\to} \exp\left(- \frac{u^2}{2}\sigma^2 \right).
\end{equation}
\end{itemize}
\end{proposition}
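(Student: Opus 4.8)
The statement is a conditional Lindeberg--Feller central limit theorem (it is \cite[Proposition~9.5.12]{iihmm} verbatim, modulo the evident slip that the conclusion is meant for the \emph{conditional} characteristic function, which is the form actually used in the proof of Theorem~\ref{thm:pswarmclt1}), and I would prove it by the classical characteristic-function route, conditioning throughout on $\mathcal{F}^N$. Write $\bar V_{N,i}:=\mathbb{E}[V_{N,i}\mid\mathcal{F}^N]$, $s_{N,i}^2:=\mathbb{E}[V_{N,i}^2\mid\mathcal{F}^N]-\bar V_{N,i}^2$, $S_N:=\sum_{i=1}^{M_N}(V_{N,i}-\bar V_{N,i})$, and $\psi_N(u):=\mathbb{E}[e^{iuS_N}\mid\mathcal{F}^N]$. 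Conditional independence (hypothesis~1) gives the product form $\psi_N(u)=\prod_{i=1}^{M_N}\phi_{N,i}(u)$ with $\phi_{N,i}(u):=\mathbb{E}[e^{iu(V_{N,i}-\bar V_{N,i})}\mid\mathcal{F}^N]$, and it suffices to show $\psi_N(u)\overset{\text{p}}{\to}e^{-u^2\sigma^2/2}$; if one insists on the unconditional version as literally displayed, one appends a bounded-convergence step using $|\psi_N(u)|\le 1$.

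The first step is uniform asymptotic negligibility. From $|\bar V_{N,i}|\le\epsilon+\epsilon^{-1}\mathbb{E}[V_{N,i}^2\,1(|V_{N,i}|\ge\epsilon)\mid\mathcal{F}^N]$ and hypothesis~3 one obtains $\max_i|\bar V_{N,i}|\overset{\text{p}}{\to}0$, and this together with an elementary inequality transfers hypothesis~3 to the centered array, i.e.\ $\sum_i\mathbb{E}[(V_{N,i}-\bar V_{N,i})^2\,1(|V_{N,i}-\bar V_{N,i}|\ge\epsilon)\mid\mathcal{F}^N]\overset{\text{p}}{\to}0$ for every $\epsilon>0$. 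Splitting $s_{N,i}^2$ according to whether $|V_{N,i}-\bar V_{N,i}|<\epsilon$ then gives $\max_i s_{N,i}^2\le\epsilon^2+o_{\mathrm p}(1)$, hence $\max_i s_{N,i}^2\overset{\text{p}}{\to}0$; and hypothesis~2 is precisely $\sum_i s_{N,i}^2\overset{\text{p}}{\to}\sigma^2$, so in particular $\sum_i s_{N,i}^2=O_{\mathrm p}(1)$.

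With negligibility in hand I would Taylor-expand each factor. Using $|e^{ix}-1-ix+\tfrac{x^2}{2}|\le\min(\tfrac{|x|^3}{6},x^2)$ at $x=u(V_{N,i}-\bar V_{N,i})$ and taking conditional expectations gives $\phi_{N,i}(u)=1-\tfrac{u^2}{2}s_{N,i}^2+r_{N,i}$ with $|r_{N,i}|\le\tfrac{|u|^3\epsilon}{6}s_{N,i}^2+u^2\,\mathbb{E}[(V_{N,i}-\bar V_{N,i})^2\,1(|V_{N,i}-\bar V_{N,i}|\ge\epsilon)\mid\mathcal{F}^N]$; summing, using the centered Lindeberg condition and $\sum_i s_{N,i}^2=O_{\mathrm p}(1)$, and choosing $\epsilon$ small, shows $\sum_i r_{N,i}\overset{\text{p}}{\to}0$. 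Writing $z_{N,i}:=-\tfrac{u^2}{2}s_{N,i}^2+r_{N,i}$, we have $\max_i|z_{N,i}|\overset{\text{p}}{\to}0$ and $\sum_i|z_{N,i}|=O_{\mathrm p}(1)$, so for $N$ large the principal branch of $\log$ applies to each $1+z_{N,i}$ and $\bigl|\sum_i\log(1+z_{N,i})-\sum_i z_{N,i}\bigr|\le\max_i|z_{N,i}|\sum_i|z_{N,i}|\overset{\text{p}}{\to}0$. Hence $\log\psi_N(u)=\sum_i z_{N,i}+o_{\mathrm p}(1)=-\tfrac{u^2}{2}\sum_i s_{N,i}^2+\sum_i r_{N,i}+o_{\mathrm p}(1)\overset{\text{p}}{\to}-\tfrac{u^2}{2}\sigma^2$, and the continuous mapping theorem yields $\psi_N(u)\overset{\text{p}}{\to}e^{-u^2\sigma^2/2}$.

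The substantive work is entirely the bookkeeping described above: controlling the remainders $r_{N,i}$, deriving $\max_i|\bar V_{N,i}|\overset{\text{p}}{\to}0$, $\max_i s_{N,i}^2\overset{\text{p}}{\to}0$ and the centered Lindeberg condition from hypothesis~3, and the standard (but slightly delicate, since complex-valued) passage from a product to the exponential of a sum of logarithms; the diagonal-in-$\epsilon$ argument needed to upgrade the ``$\limsup\le\epsilon^2$ for all $\epsilon$'' estimates to genuine convergence in probability is routine. Since the statement coincides with \cite[Proposition~9.5.12]{iihmm}, in the paper itself I would simply cite it, with the sketch above recording the argument behind it.
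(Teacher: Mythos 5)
The paper does not prove this proposition at all---it is transcribed verbatim from \cite[Proposition 9.5.12]{iihmm} for self-containment, which is exactly the course of action you recommend at the end of your proposal. Your reconstruction via the conditional Lindeberg--Feller characteristic-function argument is the standard proof and is correct, including your observation that the conclusion is really meant for the \emph{conditional} characteristic function (the form actually invoked in the proofs of Lemmas~\ref{lem:mutclt} and \ref{lem:resampclt}), with the unconditional display as written following by bounded convergence.
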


\begin{proof}[Proof of Lemma~\ref{lem:mutclt}]
Pick any $\theta \in \Theta$, $f \in L^2(\mathsf{X}, \phi_{\theta,y_{1:t}})$. Without loss of generality, assume that $\phi_{\theta,y_{1:t-1}}(T_{\theta,y_t}(x_{t-1}, f)) = 0$. 

The numerator on the right hand side of
\begin{align*}
N_X^{1/2}\left\{ \hat{\phi}_{N_X, \theta,y_{1:t}}(f) - \phi_{\theta,y_{1:t}}( f) \right\}
=
\frac{N_X^{-1/2} \sum_{j=1}^{N_X} \tilde{W}_{t,\theta}^{N,j} f(\tilde{X}_t^{N,j}) }{ N_X^{-1} \sum_{j'=1}^{N_X}\tilde{W}_{t,\theta}^{N,j'} }
\end{align*}
can be rewritten as $N_X^{1/2}\left\{ A_{N,\theta} + B_{N,\theta} \right\}$ where

\begin{align*}
A_{N,\theta} &:=  N_X^{-1} \sum_{j=1}^{N_X} \left\{ \tilde{W}_{t,\theta}^{N,j} f(\tilde{X}_t^{N,j})  -  \mathbb{E}\left[ \tilde{W}_{t,\theta}^{N,j} f(\tilde{X}_k^{N,j}) \mid \mathcal{F}_{\theta, t-1}^N \right] \right\}, \\
B_{N,\theta} &:= N_X^{-1} \sum_{j=1}^{N_X} \mathbb{E}\left[ \tilde{W}_{t,\theta}^{N,j} f(\tilde{X}_t^{N,j}) \mid \mathcal{F}_{\theta, t-1}^N \right] .
\end{align*}
By assumption~\ref{assmp4}, $N_X^{1/2} B_{N,\theta}$ is asymptotically normal as soon as we can show $T_{\theta, y_{t}}( \cdot, f)  \in L^2(\mathsf{X}, \phi_{\theta, y_{1:t-1} })$. This is indeed true--by Jensen's inequality
$$
\left[ \int F_{\theta}( x_{t-1} , dx_t)g_{\theta} (x_t,y_t) f(x_t) \right]^2 
\le  
\int F_{\theta}( x_{t-1} , dx_t)g^2_{\theta} (x_t,y_t) f^2(x_t) .
$$
Taking expectations on both sides and using assumptions \ref{assmp1} and \ref{assmp2}
\begin{align*}
\phi_{\theta, y_{1:t-1} } \left[\left( T_{\theta,y_{t}}( x_{t-1} , f) \right)^2 \right] 
&\le 
\phi_{\theta,y_{1:t-1}}[T(x_{t-1}, 1)]
\frac{\phi_{\theta, y_{1:t-1} } \left[\int F_{\theta}( x_{t-1} , dx_t)g^2_{\theta} (x_t,y_t) f^2(x_t) \right] }{\phi_{\theta,y_{1:t-1}}[T(x_{t-1}, 1)]}\\ 
&=
\phi_{\theta,y_{1:t-1}}[T(x_{t-1}, 1)] \phi_{\theta, y_{1:t} } \left[ g_{\theta} (x_t,y_t) f^2(x_t)\right].
\end{align*}
Assumption~\ref{assmp3} and the assumption of $f \in L^2(\mathsf{X}, \phi_{\theta,y_{1:t}})$ give an upper bounds for these factors because

\begin{enumerate}
\item $\phi_{\theta,y_{1:t-1}}[T(x_{t-1}, 1)] = \iint \phi_{\theta,y_{1:t-1}}(dx_{t-1}) F(x_{t-1}, dx_t)g_{\theta,y_t}(x_t) \le || g_{\theta,y_t}||_{\infty} $, 
\item $\phi_{\theta, y_{1:t} } \left[ g_{\theta} (x_t,y_t) f^2(x_t)\right] \le ||g_{\theta,y_t}||_{\infty} \phi_{\theta, y_{1:t} } \left[ f^2(x_t)\right]$.
\end{enumerate}
So $T_{\theta, y_{t}}( x_{t-1}, f)  \in L^2(\mathsf{X}, \phi_{\theta, y_{1:t-1} })$ and this gives us
\begin{equation*}
\mathbb{E}\left[  \exp\left( is N_X^{1/2} B_{N,\theta}  \right)  \right]  
\overset{\text{p}}{\to} 
\exp\left[ -\frac{s^2}{2}\sigma^2_{\theta,t-1}\left[ T_{\theta, y_{t}}  (x_{t-1},f)  \right] \right].
\end{equation*}

Now, focusing on $N_X^{1/2}A_{N,\theta}$, 
\begin{equation}\label{bnconvergence}
\mathbb{E}\left[ \exp\left( i r N_X^{1/2} A_{N,\theta} \right)  \mid \mathcal{F}_{\theta, t-1}^N  \right] \overset{\text{p}}{\to} \exp\left( - \frac{r^2}{2} \eta^2_{\theta,t-1}(f)  \right)
\end{equation}
by proposition~\ref{prop9512}, whose assumptions are now verified.

First, the triangular array $\{N_X^{-1/2} \tilde{W}_{t,\theta}^{N,j} f(\tilde{X}^{N,j}_t) \}_{1 \le j \le N_X}$ is conditionally independent given $\mathcal{F}_{\theta,t-1}^N$ by the description of the mutation step of algorithm~\ref{alg:sisr}.

Second, by assumption~\ref{assmp5}, the unnormalized weights are bounded above by some finite $M$, so
\begin{align}\label{eqn:upper_bnd}
\mathbb{E} \left[ (\tilde{W}_{t,\theta}^{N,j})^2 f^2(\tilde{X}^{N,j}_t) \mid \mathcal{F}_{\theta, t-1}^N \right] 
&= 
\int Q_{\theta,y_t}(x_{t-1}^{N,j}, d X_t) \left( \frac{dT(x_{t-1}^{N,j},\cdot)}{dQ(x_{t-1}^{N,j}, \cdot)}(\tilde{X}_t^{N,j}) \right)^2f^2(\tilde{X}_t^{N,j}) \\
&\le M  T_{\theta,y_t}(x_{t-1}^{N,j}, f^2);
\end{align}
the right hand side is finite with probability $1$ because we assume

\begin{equation}\label{eqn:fraction}
\phi_{\theta,y_{1:t}}(f^2) = \frac{\phi_{\theta,y_{1:t-1}}\left( T_{\theta,y_t}(x_{t-1}, f^2) \right)}{\phi_{\theta,y_{1:t-1}}\left( T_{\theta,yt}(x_{t-1}, 1) \right) } < \infty.
\end{equation}
Assumptions \ref{assmp2} and \ref{assmp3} imply that the numerator and denominator are finite, separately, so for any $N \ge 1$ and $1 \le j \le N_X$, $N_X^{-1} \mathbb{E}\left[ \left(\tilde{W}_{t,\theta}^{N,j}\right)^2 f^2(\tilde{X}^{N,j}_t) \mid \mathcal{F}_{\theta,t-1}^N \right]$ is finite. 

Third, we can show that 
\begin{equation}\label{expr:third_cond}
N_X^{-1} \sum_{j=1}^{N_X} \left\{ 
\mathbb{E}\left[ \left(\tilde{W}_{t,\theta}^{N,j}\right)^2 f^2(\tilde{X}_t^{N,j}) \mid \mathcal{F}_{\theta,t-1}^N \right]
-
\left( \mathbb{E}\left[ \tilde{W}_{t,\theta}^{N,j} f(\tilde{X}_t^{N,j}) \mid \mathcal{F}_{\theta,t-1}^N \right] \right)^2
\right\} 
\end{equation}
converges in probability to $\eta^2_{\theta,t-1}(f) >0$. The finiteness of equation~\ref{eqn:upper_bnd} and our assumption of consistency for $L^1(\mathsf{X}, \phi_{\theta,y_{1:t-1}})$ yield
\begin{equation*}
N_X^{-1} \sum_{j=1}^{N_X} \mathbb{E}\left[ \left(\tilde{W}_{t,\theta}^{N,j}\right)^2 f^2(\tilde{X}_t^{N,j}) \mid \mathcal{F}_{\theta,t-1}^N \right]
\overset{\text{p}}{\to}
\phi_{\theta,y_{1:t-1}} \left\{ \int Q_{\theta,y_t}(x_{t-1}, dx_t) \left( \frac{dT(x_{t-1}, \cdot)}{dQ(x_{t-1}, \cdot)}(x_t) \right)^2 f^2(x_t) \right\}.
\end{equation*}
Similarly, 

\begin{equation}
N_X^{-1} \sum_{j=1}^{N_X} \left( \mathbb{E}\left[ \tilde{W}_{t,\theta}^{N,j} f(\tilde{X}_t^{N,j}) \mid \mathcal{F}_{\theta,t-1}^N \right] \right)^2
\overset{\text{p}}{\to} \phi_{\theta,y_{1:t-1}} \left\{  T_{\theta,y_t}(x_{t-1},  f)^2 \right\},
\end{equation}
and so expression~\ref{expr:third_cond} does indeed converge to $\eta^2_{\theta,t-1}(f)$.

Regarding the fourth condition of proposition~\ref{prop9512}, pick any $\epsilon > 0$; for an arbitrary $C_{\epsilon}$ such that $N_X^{1/2} \epsilon \ge C_{\epsilon}$, we have

\begin{align*}
&N_X^{-1} \sum_{j=1}^{M_N} E\left[  \left(\tilde{W}_{t,\theta}^{N,j}\right)^2 f^2(\tilde{X}^{N,j}_t) \mathbf{1}\left(N_X^{-1/2} \left|\tilde{W}_{t,\theta}^{N,j}\right| \left| f(\tilde{X}^{N,j}_t) \right| \ge \epsilon \right) \mid \mathcal{F}^N_{\theta, t-1} \right] \\
&\le
N_X^{-1} \sum_{j=1}^{M_N} E\left[  \left(\tilde{W}_{t,\theta}^{N,j}\right)^2 f^2(\tilde{X}^{N,j}_t) \mathbf{1}\left( \left|\tilde{W}_{t,\theta}^{N,j}\right| \left| f(\tilde{X}^{N,j}_t) \right| \ge C_\epsilon \right) \mid \mathcal{F}^N_{\theta, t-1} \right].
\end{align*}
A dominated convergence argument similar to that used in the proof to lemma~\ref{lem:resampling} yields
\begin{align*}
0 &\le \text{plim}_{N_X \to \infty} N_X^{-1} \sum_{j=1}^{M_N} E\left[  \left(\tilde{W}_{t,\theta}^{N,j}\right)^2 f^2(\tilde{X}^{N,j}_t) \mathbf{1}\left(N_X^{-1/2} \left|\tilde{W}_{t,\theta}^{N,j}\right| \left| f(\tilde{X}^{N,j}_t) \right| \ge \epsilon \right) \mid \mathcal{F}^N_{\theta, t-1} \right] \\
&= \text{plim}_{C_\epsilon \to \infty} \text{plim}_{N_X \to \infty} N_X^{-1} \sum_{j=1}^{M_N} E\left[  \left(\tilde{W}_{t,\theta}^{N,j}\right)^2 f^2(\tilde{X}^{N,j}_t) \mathbf{1}\left(N_X^{-1/2} \left|\tilde{W}_{t,\theta}^{N,j}\right| \left| f(\tilde{X}^{N,j}_t) \right| \ge \epsilon \right) \mid \mathcal{F}^N_{\theta, t-1} \right] \\
&\le
\text{plim}_{C_\epsilon \to \infty} \phi_{\theta,y_{1:t-1}}\left\{ E\left[  \left(\tilde{W}_{t,\theta}^{N,1}\right)^2 f^2(\tilde{X}^{N,1}_t) \mathbf{1}\left( \left|\tilde{W}_{t,\theta}^{N,1}\right| \left| f(\tilde{X}^{N,1}_t) \right| \ge C_\epsilon \right) \mid \mathcal{F}^N_{\theta, t-1} \right]  \right\} \\
&\le
 \phi_{\theta,y_{1:t-1}}\left\{ \lim_{C_\epsilon \to \infty} E\left[  \left(\tilde{W}_{t,\theta}^{N,1}\right)^2 f^2(\tilde{X}^{N,1}_t) \mathbf{1}\left( \left|\tilde{W}_{t,\theta}^{N,1}\right| \left| f(\tilde{X}^{N,1}_t) \right| \ge C_\epsilon \right) \mid \mathcal{F}^N_{\theta, t-1} \right]  \right\} \\
&=  \phi_{\theta,y_{1:t-1}}(0) = 0.
\end{align*}

This dominated convergence theorem argument is available because these conditional expectations are bounded above the same expectation with the indicator function removed. Therefore, because of our assumption of consistency, proposition~\ref{prop9512} is applicable, and \ref{bnconvergence} holds.

Iterating the expectation of the joint characteristic function of $N_X^{1/2}A_{N,\theta}$ and $N_X^{1/2}B_{N,\theta}$, taking the limit as $N_X \to \infty$, and utilizing the dominated convergence theorem once again, we can see that these two pieces are asymptotically independent:
\begin{align*}
& \lim_{N\to \infty} \mathbb{E}\left( \exp\left( i r N_X^{1/2}A_{N,\theta} \right) \exp\left( is N_X^{1/2}B_{N,\theta}\right)  \right) \\
&= \mathbb{E}\left( \lim_{N\to \infty} \exp\left( is N_X^{1/2}B_{N,\theta}\right) \lim_{N\to \infty} \mathbb{E}\left[ \exp\left( i r N_X^{1/2}A_{N,\theta} \right)  \mid \mathcal{F}_{\theta, t-1}^N  \right] \right) \\
&= \mathbb{E}\left( \lim_{N\to \infty} \exp\left( is N_X^{1/2}B_{N,\theta}\right) \exp\left[ - \frac{r^2 }{2} \eta^2_{\theta,t-1}(f) \right] \right) \\
&= \lim_{N\to \infty} \mathbb{E}\left(  \exp\left( is N_X^{1/2}B_{N,\theta}\right)  \right) \exp\left[ - \frac{r^2 }{2} \eta^2_{\theta,t-1}(f) \right]  \\
&= \exp\left[ -\frac{s^2}{2}\sigma^2_{\theta,t-1}\left[ T_{\theta, y_{t}}  (x_{t-1},f)  \right] \right]\exp\left[ - \frac{r^2 }{2} \eta^2_{\theta,t-1}(f) \right].
\end{align*}

The delta method gives us
\begin{equation}
N_X^{1/2}\left\{A_{N,\theta} + B_{N,\theta}\right\} 
\overset{\text{D}}{\to} 
\mathcal{N}( 
0, 
\sigma^2_{\theta,t-1}\left[ T_{\theta, y_{t}}  (x_{t-1},f)  \right] + \eta^2_{\theta,t-1}(f)).
\end{equation}

Next, $\sum_{j=1}^{N_X} N_X^{-1} \tilde{W}_{t,\theta}^{N,j}$ converges in probability to $\phi_{\theta,y_{1:t-1}}(T_{\theta, y_{t}}( x_{t-1}, 1))$ because $T_{\theta, y_{t}}( x_{t-1}, 1)  \in L(\mathsf{X}, \phi_{\theta, y_{1:t-1} })$. Slutsky's theorem tells us that

\begin{equation}
\frac{ N_X^{1/2}\left\{A_{N,\theta} + B_{N,\theta}\right\}  }{ \sum_{j=1}^{N_X} N_X^{-1} \tilde{W}_{t,\theta}^{N,j} } \overset{\text{D}}{\to} 
\mathcal{N}\left(
0, 
\frac{ \sigma^2_{\theta,t-1}\left[ T_{\theta, y_{t}}  (x_{t-1},f)  \right] + \eta^2_{\theta,t-1}(f) }{ \left[ \phi_{\theta,y_{1:t-1}}(T(x_{t-1},1)) \right]^2 } \right). 
\end{equation}

If $\phi_{\theta,y_{1:t-1}}(T_{\theta,y_t}(x_{t-1}, f)) \neq 0$, perform the same proof on $\bar{f} := f - \phi_{\theta,y_{1:t-1}}(T_{\theta,y_t}(x_{t-1}, f))$.  All the same assumptions are met for this function, the expression for the asymptotic variance does not change, and 
\begin{equation*}
N_X^{1/2}\left\{ \hat{\phi}_{N_X, \theta,y_{1:t}}(f) - \phi_{\theta,y_{1:t}}( f) \right\} 
=
N_X^{1/2}\left\{ \hat{\phi}_{N_X, \theta,y_{1:t}}(\bar{f}) - \phi_{\theta,y_{1:t}}( \bar{f} ) \right\}.
\end{equation*}

\end{proof}

\begin{proof}[Proof of Lemma~\ref{lem:resampclt}]

Pick $\theta$, $t$, and $f \in L^2(\mathsf{X}, \phi_{\theta,y_{1:t}})$, and add and subtract $N_X^{1/2}\hat{\phi}_{N_X, \theta,y_{1:t}}(f)$ from the left hand side of the above expression. $N_X^{1/2} \left\{ \hat{\phi}_{N_X, \theta,y_{1:t}}(f) - \phi_{\theta,y_{1:t}}(f) \right\}$ is asymptotically normal by assumption, so we turn our attention to the difference of these two quantities.

The difference can be written in terms of conditional expectations as follows:
\begin{equation*}
N_X^{1/2} \left\{ \check{\phi}_{N_X, \theta,y_{1:t}}(f) - \hat{\phi}_{N_X, \theta,y_{1:t}}(f) \right\} 
=
N_X^{1/2} \left\{N_X^{-1} \sum_{j=1}^{N_X} f(X^{N,j}_{t}) - 
E\left[ f(X^{N,j}_t) \mid \tilde{\mathcal{F}}^N_{\theta, t} \right] \right\}.
\end{equation*}
This will converge to a mean $0$ normal distribution after we verify the assumptions of proposition~\ref{prop9512}.

The first assumption of this proposition is that the triangular array $\{N_X^{-1/2} f(X^{N,j}_t) \}_{1 \le j \le N_X}$ is conditionally independent given $\tilde{\mathcal{F}}^N_{\theta, t}$. This is true by the description of algorithm~\ref{alg:sisr}. 

Second, any conditional second moment 
$$
E \left[ f^2(X^{N,j}_t) \mid \tilde{\mathcal{F}}_{\theta, t}^N \right] 
= 
\sum_{j=1}^{N_X} \frac{\tilde{W}_{t,\theta^i}^{N,j} }{ \sum_{j'} \tilde{W}_{t,\theta^i}^{N,j'} } f^2(\tilde{X}^{N,j}_t)
\le \max_{1 \le j \le N_X} f^2(\tilde{X}^{N,j}_t)
$$
is finite for any fixed sample size because assumptions \ref{assmp4} and \ref{assmp5} guarantee the sum of normalized weights is $1$.

Third, we must show
\begin{align*}
\sum_{j=1}^{N_X} \text{Var}\left[ N_X^{-1/2}  f(X^{N,j}_t) \mid \tilde{\mathcal{F}}^N_{\theta, t} \right]
&= E\left[   f^2(X^{N,1}_t) \mid \tilde{\mathcal{F}}^N_{\theta, t} \right]
-
\left\{ E\left[   f(X^{N,1}_t) \mid \tilde{\mathcal{F}}^N_{\theta, t} \right] \right\}^2 \\
&= \hat{\phi}_{N_X, \theta,y_{1:t}}(f^2) - \left( \hat{\phi}_{N_X, \theta,y_{1:t}}(f) \right)^2
\end{align*}
converges in probability to some positive constant as $N_X \to \infty$. This is true by our assumption of the consistency of $\hat{\phi}_{N_X, \theta,y_{1:t}}(f'')$, and by the continuous mapping theorem.

Finally, regarding the fourth condition of proposition~\ref{prop9512}, pick any $\epsilon > 0$, and then pick $C_{\epsilon} > 0$. For $N$ is large enough so that $N_X^{1/2} \epsilon \ge C_{\epsilon}$, we have

\begin{align*}
&N_X^{-1} \sum_{j=1}^{M_N} E\left[   f^2(X^{N,j}_t) \mathbf{1}\left(N_X^{-1/2} \left| f(X^{N,j}_t) \right| \ge \epsilon \right) \mid \tilde{\mathcal{F}}^N_{\theta, t} \right] \\
&\le
N_X^{-1} \sum_{j=1}^{M_N} E\left[   f^2(X^{N,j}_t) \mathbf{1}\left(  \left| f(X^{N,j}_t) \right| \ge C_\epsilon \right) \mid \tilde{\mathcal{F}}^N_{\theta, t} \right].
\end{align*}
We also know 

\begin{equation*}
E\left[  f^2(X^{N,j}_t) \mathbf{1}\left(  \left| f(X^{N,j}_t) \right| \ge C_\epsilon \right) \mid \tilde{\mathcal{F}}^N_{\theta, t} \right] 
\le
E\left[   f^2(X^{N,j}_t)  \mid \tilde{\mathcal{F}}^N_{\theta, t} \right] < \infty.
\end{equation*}
Putting these two ideas together

\begin{align*}
0 &\le \text{plim}_{N_X \to \infty} N_X^{-1} \sum_{j=1}^{M_N} E\left[   f^2(X^{N,j}_t) \mathbf{1}\left(N_X^{-1/2} \left| f(X^{N,j}_t) \right| \ge \epsilon \right) \mid \tilde{\mathcal{F}}^N_{\theta, t} \right] \\
&= \text{plim}_{C_\epsilon \to \infty} \text{plim}_{N_X \to \infty} N_X^{-1} \sum_{j=1}^{M_N} E\left[ f^2(X^{N,j}_t) \mathbf{1}\left(N_X^{-1/2} \left| f(X^{N,j}_t) \right| \ge \epsilon \right) \mid \tilde{\mathcal{F}}^N_{\theta, t} \right] \\
&\le
\text{plim}_{C_\epsilon \to \infty} \phi_{\theta,y_{1:t-1}}\left\{ E\left[ f^2(X^{N,1}_t) \mathbf{1}\left( \left| f(X^{N,1}_t) \right| \ge C_\epsilon \right) \mid \tilde{\mathcal{F}}^N_{\theta, t} \right]  \right\} \\
&\le
 \phi_{\theta,y_{1:t-1}}\left\{ \lim_{C_\epsilon \to \infty} E\left[ f^2(X^{N,1}_t) \mathbf{1}\left( \left| f(X^{N,1}_t) \right| \ge C_\epsilon \right) \mid \tilde{\mathcal{F}}^N_{\theta, t} \right]  \right\} \tag{DCT} \\
&=  \phi_{\theta,y_{1:t-1}}(0) = 0.
\end{align*}

Proposition~\ref{prop9512} now yields
\begin{equation*}
E\left[ \exp\left( i r N_X^{1/2} \left\{ \check{\phi}_{N_X, \theta,y_{1:t}}(f) - \hat{\phi}_{N_X, \theta,y_{1:t}}(f) \right\} \right)  \mid \tilde{\mathcal{F}}^N_{\theta, t} \right]
\overset{\text{p}}{\to} 
\exp\left[ - \frac{r^2}{2} \left\{ \phi_{\theta,y_{1:t}}(f^2) - \left[ \phi_{\theta,y_{1:t}}(f) \right]^2  \right\}  \right]
\end{equation*}
as $N_X \to \infty$.

Finding the joint characteristic function can be done in the same way as in the previous section--by taking the limit, iterating the expectation, and using the dominated convergence theorem:
\begin{align*}
& \lim_{N_X \to \infty} E\left[ \exp\left( i r N_X^{1/2} \left\{ \check{\phi}_{N_X, \theta,y_{1:t}}(f) - \hat{\phi}_{N_X, \theta,y_{1:t}}(f) \right\} \right)\exp\left( is N_X^{1/2} \left\{ \hat{\phi}_{N_X, \theta,y_{1:t}}(f) - \phi_{\theta,y_{1:t}}(f) \right\}  \right) \right] \\
&=  
\exp\left[ - \frac{r^2}{2} \left\{ \phi_{\theta,y_{1:t}}(f^2) - \left[ \phi_{\theta,y_{1:t}}(f) \right]^2  \right\}  \right]
\exp\left[- \frac{s^2}{2} \tilde{\sigma}^2_{\theta,t}(f) \right]
\end{align*}
From this, the delta method gives us the asymptotic normality of $N_X^{1/2} \check{\phi}_{N_X, \theta,y_{1:t}}(f)$.
\end{proof}


\begin{proof}[Proof of Theorem~\ref{thm:sisrclt}]
Pick any $\theta \in \Theta$ and note that, for $f \in L^2(\mathsf{X}, \phi_{\theta,y_1})$, $N_X^{1/2}\hat{\phi}_{N_X, \theta,y_{1}}(f)$ is asymptotically normal by the traditional CLT and Slutsky's theorem. This is also a corollary of \cite[Theorem 9.1.8]{iihmm}, but in that case, the target measure is normalized.

For $\check{\phi}_{N_X, \theta,y_{1}}(f)$, obtain the result by applying Theorem~\ref{thm:sisr} and lemma~\ref{lem:resampclt}.

For $\check{\phi}_{N_X, \theta,y_{1:t}}(f)$  or  $\hat{\phi}_{N_X, \theta,y_{1:t}}(f)$ with $t \ge 2$, use induction with Theorem~\ref{thm:sisr} and lemmas \ref{lem:mutclt} and \ref{lem:resampclt}.

\end{proof}


\section*{References}

\bibliography{mybibfile}

\end{document}